\newtheorem{theorem}{Theorem}
\newtheorem{lemma}{Lemma}
\newacronym{aes}{AES}{Advanced Encryption Standard}
\newacronym{api}{API}{Application Programming Interface}
\newacronym{aslr}{ASLR}{Address Space Layout Randomization}
\newacronym{bti}{BTI}{Branch Target Identification}
\newacronym{dop}{DOP}{Data-Oriented Programming}
\newacronym{cet}{CET}{Control-flow Enforcement Technology}
\newacronym{cfi}{CFI}{Control-Flow Integrity}
\newacronym{cfg}{CFG}{Control-Flow Graph}
\newacronym{cmac}{CMAC}{Cipher-based Message Authentication Code}
\newacronym{cpi}{CPI}{Code-Pointer Integrity}
\newacronym{cve}{CVE}{Common Vulnerabilities and Exposure}
\newacronym{dfi}{DFI}{Data-Flow Integrity}
\newacronym{fifo}{FIFO}{First-In-First-Out}
\newacronym{isa}{ISA}{Instruction-set Architecture}
\newacronym{jop}{JOP}{Jump-oriented programming}
\newacronym{lbc}{LBC}{Light-weight Bounds Checking}
\newacronym{lifo}{LIFO}{Last-In-First-Out}
\newacronym{mac}{MAC}{Message Authentication Code}
\newacronym{os}{OS}{operating system}
\newacronym{pa}{PA}{Pointer Authentication}
\newacronym{pac}{PAC}{Pointer Authentication Code}
\newacronym{ro}{RO}{Random Oracle}
\newacronym{rop}{ROP}{Return Oriented Programming}
\newacronym{stl}{STL}{Standard Template Library}
\newacronym{tls}{TLS}{Transport Layer Security}
\newacronym{wXORx}{$\text{W}\!\oplus\text{X}$}{write-XOR-execute}
\newacronym{xom}{XOM}{Execute-only Memory}
\newcommand{\afast}{\textsf{Adv-Fast}\xspace}
\newcommand{\aslow}{\textsf{Adv-Slow}\xspace}
\newcommand{\asingle}{\textsf{Adv-Single}\xspace}
\newacronym{STL}{STL}{Standard Template Library}
\newcommand{\topMAC}{top \gls{mac}\xspace}
\definecolor{codegreen}{rgb}{0,0.6,0}
\definecolor{codegray}{rgb}{0.5,0.5,0.5}
\definecolor{codepurple}{rgb}{0.58,0,0.82}
\definecolor{backcolour}{rgb}{0.95,0.95,0.92}
\lstdefinestyle{mystyle}{
    backgroundcolor=\color{backcolour},   
    commentstyle=\color{codegreen},
    keywordstyle=\color{magenta},
    numberstyle=\tiny\color{codegray},
    stringstyle=\color{codepurple},
    basicstyle=\ttfamily\footnotesize,
    breakatwhitespace=false,         
    breaklines=true,                 
    captionpos=b,                    
    keepspaces=true,                 
    numbers=left,                    
    numbersep=5pt,                  
    showspaces=false,                
    showstringspaces=false,
    showtabs=false,                  
    tabsize=2
}
\crefname{lemma}{Lemma}{Lemmas}
\newif\ifarchive
\begin{document}
\title{Towards cryptographically-authenticated in-memory data structures}

\author{
    \IEEEauthorblockN{Setareh Ghorshi}
    \IEEEauthorblockA{University of Waterloo \\Canada \\ ghorshi.setareh@gmail.com}
    \and
    \IEEEauthorblockN{Lachlan J.\ Gunn}
    \IEEEauthorblockA{Aalto University \\ Finland\\ lachlan@gunn.ee}
    \and
    \IEEEauthorblockN{Hans Liljestrand}
    \IEEEauthorblockA{University of Waterloo \\ Canada\\ hans@liljestrand.dev}
    \and
    \IEEEauthorblockN{N.\ Asokan}
    \IEEEauthorblockA{University of Waterloo \\Canada \\ asokan@acm.org}
}
\maketitle
\begin{abstract}
Modern processors include high-performance cryptographic functionalities such as Intel's AES-NI and ARM's Pointer Authentication that allow programs to efficiently authenticate data held by the program.  Pointer Authentication is already used to protect return addresses in recent Apple devices, but as yet these structures have seen little use for the protection of general program data.

In this paper, we show how cryptographically-authenticated data structures can be used to protect against attacks based on memory corruption, and show how they can be efficiently realized using widely available hardware-assisted
cryptographic mechanisms. We present realizations of secure stacks and queues with minimal overall performance overhead (3.4\%--6.4\% slowdown of the OpenCV core performance tests), and provide proofs of correctness.
\end{abstract}

\IEEEpeerreviewmaketitle

\section{Introduction}
Since the time of the Morris Worm~\cite{spafford1989internet}, memory corruption vulnerabilities have been used take control of programs and steal data or cause damage.  This class of vulnerability is still heavily exploited~\cite{miller2019trends}, and much effort has been spent by defenders in hardening programs against memory corruption, and by adversaries in overcoming these defences.

Most of these attacks and defenses focus on programs' control flow: adversaries attempt to overwrite code pointers in the program, such as return addresses or function pointers, in order to make the program deviate from its intended behavior.  Improved defences have reduced adversaries' ability to modify these pointers; one such approach takes advantage of cryptographic functionality \gls{isa} extensions such as ARM's \gls{pa}~\cite{qualcomm2017pointer}.  Apple devices already use \gls{pa} to provide some protection against modification of return addresses on the program stack, and \gls{pa} also can be used to prevent the modification of forward code pointers and data pointers.  However, pointer integrity is not enough, as an adversary can modify a program behavior by overwriting its data, using attacks such as \gls{dop}~\cite{hu2016dataoriented}.  In this paper, we propose the use of cryptographically-authenticated data structures for particularly sensitive program data.

We show how to realize authenticated data structures using widely available hardware primitives like \gls{pa} and AES-NI~\cite{amd-bulldozer-pilediver,gueron2012intel}.
We present an authenticated stack and queue, along with benchmarks to demonstrate their utility in real-world software.  For want of space we omit the design of an authenticated tree.

Our contributions are as follows:
\begin{itemize}
    \item An argument for authenticated data structures (§\ref{sec:justification}).
    \item Designs for an authenticated stack and queue (§\ref{sec:design}).
    \item Implementations of said stack and queue for x86-64 and Arm Aarch64 architectures (§\ref{sec:implementation}).
    \item An evaluation of the authenticated data structures, showing overheads of $3.4\%$ (push+pop) or $6.5\%$ (random access) in realistic software (§\ref{sec:performance-evaluation}), and proving their security (§\ref{sec:security-evaluation}).
\end{itemize}

\section{Background}

\subsection{Memory corruption vulnerabilities}
\label{sec:background-wxorx}

Run-time memory errors occur when a memory access reads or writes to unintended memory.
Such errors can allow an attacker to overwrite data or alter program functionality.
Around 70\% of \glspl{cve} are caused by memory safety errors~\cite{miller2019trends}, indicating that they remain a prevalent threat.

Code injection exploits are today widely mitigated on all contemporary \glspl{os} by \gls{wXORx}, such as Windows DEP~\cite{microsoft-dep}.
This has led to the introduction of attacks that alter program behavior without injecting new code by corrupting control data such as return addresses or function pointers: from early return-to-libc attacks~\cite{peslyak1997getting} that can execute arbitrary libc functions, to later generalizations such as  \gls{rop}~\cite{shacham2007geometry} and \gls{jop}~\cite{kornau2009return,bletsch2011jumporiented}.

\subsection{\glsdesc{cfi}}
\label{sec:background-cfi}

\Gls{cfi} ensures that all control-flow transitions within a program conform to a \gls{cfg} generated at compile-time via static analysis\cite{abadi2005controlflow}.
Because the checks are based on static information, they are overly permissive; based on a possibly over approximated points-to set~\cite{abadi2005controlflow}, or in the case of LLVM, based on function type only~\cite{clang14-cfi}.
New features in contemporary processors include hardware-support for coarse-grained \gls{cfi}---e.g., Intel \gls{cet} and ARM \gls{bti}---that simply check that control-flow transfers target a valid function entry point or branch target, but not that the target is the correct one, ensuring that straight-line code cannot be partially executed by jumping into the middle of it.

\subsection{Hardware-assisted cryptography}
\label{sec:background-pa}
\label{sec:background-aesni}

On x86-64 platforms, the AES-NI extension---introduced on Intel CPUs in 2008~\cite{gueron2012intel} and AMD CPUs in 2010~\cite{amd-bulldozer-pilediver}---adds instructions that provide high-performance AES encryption, decryption, and key expansion.

ARMv8.3-A \gls{pa} is an \gls{isa} extension that computes tweakable \glspl{mac}~\cite{qualcomm2017pointer}.
\Gls{pa} includes specialized instructions that compute and embed a \gls{mac} over a pointer directly into unused bits of the pointer itself.
\Gls{pa} also provides a general instruction for generating a \gls{mac} over arbitrary 64-bit values, which can be used to compute \glspl{mac} efficiently over more general data.

\section{Problem description}

\subsection{System model}

We consider programs at the level of \emph{basic blocks}, which are sequences of non-branching instructions that terminate with a branch instruction. These chunks of code will always execute linearly, meaning that even an adversary who controls all data used by the code will be unable to influence its control flow within a basic block.

\subsection{Adversary model}\label{sec:adversary-model}
We assume a strong adversary capable of making arbitrary reads and writes to the program's memory space (henceforth referred to only as memory), subject to the memory access control configuration, which we assume to include a \gls{wXORx} policy enforcement (\Cref{sec:background-wxorx}), and the enforcement of coarse-grained \gls{cfi} (\Cref{sec:background-cfi}).
Together, \gls{wXORx} and \gls{cfi} imply that the adversary cannot alter the execution flow of basic blocks, or transfer control to the middle of basic blocks.  Moreover, the targets of register operations are part of the instruction encoding: this means that the attacker cannot overwrite a register by altering the program's data memory, but rather they must find a piece of code that performs the operation that they desire.

The result is that an adversary can modify data stored in memory, but cannot alter program code, and can neither read nor write registers directly.  Threads will read inputs from memory and registers, process them according to each a program-defined sequence of operations, write some values to memory, and then jump to another basic block, the target possibly depending on a result of a computation.  We also assume that the adversary cannot compromise the \gls{os} in order to bypass these protections.

We define a family of adversaries based on their ability to time reads and writes; from strongest to weakest:
\begin{itemize}
    \item \afast can read and write any location in memory at any time in the program's execution with perfect accuracy, allowing them to overwrite values spilled to memory for only a small number of cycles.
    
    \hspace{\parindent} \emph{This models the case where an adversary exploits vulnerabilities in a multi-threaded program, and can interact with other program threads so as to synchronize its exploitation with their reads and writes.}
    
    \vspace{1em}
    
    \item \aslow can read and write any location in memory, but without the timing accuracy to alter values written and re-read within a basic block.
    
    \hspace{\parindent} \emph{This models the case where an adversary exploits vulnerabilities in a multi-threaded program, but cannot do so with tight synchronization with other threads.  This is a conservative approximation, since in practice an adversary will be unable to overwrite specific regions of memory even between some basic blocks.}
    
    \vspace{1em}

    \item \asingle can read and write any location in memory when the program counter is at a vulnerable addresses.
    
    \hspace{\parindent} \emph{This models the case where an adversary exploits a vulnerability in a single-threaded program; they can read and write memory when the program reaches the location of a vulnerability, but at all other times they cannot interfere with program data.}
\end{itemize}

\subsection{Objectives}

\newcommand{\goal}[2]{
    \vspace{\baselineskip}
    \par\noindent
    \newdimen\protocolstepwidth
    \protocolstepwidth=\linewidth
    \addtolength\protocolstepwidth{-2\fboxsep}
    \fbox{\begin{minipage}[c]{\protocolstepwidth}\textbf{#1}: \emph{#2}\end{minipage}}
    \vspace{1.0em}
}

Our goal is to use cryptography to protect program data in memory, with as little modification to the program as possible.  We split this into three main objectives.

\goal{Security}{
    Authenticated data structures will behave according to their functional specification when operated on by their methods, or raise an error.
}

A data structure is described in terms of a set of operations that a program can invoke upon it, such as push and pop.  However, an adversary might modify the state of the data structure directly, without using these operations.  Our security requirement is that any attempt to do so will yield an error when the modified data is read.

\goal{Performance}{
    The use of authenticated data structures will not significantly reduce the overall performance of a program.
}

In order for authenticated data structures to be usable, their overhead must not be so high as to degrade the functionality of a program that incorporates them.

\goal{Compatibility}{
    Authenticated data structures will offer a similar interface to their non-authenticated counterparts.
}

The new functionality must not unduly limit the usage of the data structures: authenticated sdata structures must provide equivalent functionality to data in existing programming languages, so as to act as a drop-in replacement.

\subsection{Cryptographic functionality}
We assume that the \gls{isa} provides some cryptographic functionality to programs.  In particular, we suppose that the processor can compute \glspl{mac} $\mathsf{MAC}(x; k)$ without exposing key material $k$ to the attacker, and without storing the result in memory where it can be overwritten by an attacker.

\section{Securing program data flows}\label{sec:justification}

\subsection{Programs from a protocol perspective}

The computational model described in \Cref{sec:adversary-model} can be viewed as a cryptographic protocol: the program defines operations to be taken, with load and store operations indicating messages from or to the adversary.  The details depend upon the adversary model:
\begin{itemize}
    \item \afast can read and alter the contents of memory at any time.  The program uses a fixed number of registers as safe storage, with loads and stores translated to messages from and to the adversary, respectively.
    \vspace{0.15em}
    
    \item \aslow can read and alter the contents of memory at any time, but if an address is both written and read within the same basic block, then the read will yield the same value that was written.  This allows the program to spill registers to memory when needed, giving practically unlimited working storage. All other loads, and all stores, are translated to messages to and from the adversary, respectively.
    
    \vspace{0.15em}

    \item \asingle can read and alter the contents of memory only at vulnerable points in the program.  Therefore, the program uses a fixed number of registers as working storage, along with a large amount of working storage representing the program's memory; when the program reaches a vulnerable point in the code, all of memory is sent to the adversary, and replaced with new values received from the adversary.
\end{itemize}

By interpreting program execution in this way, we can turn the same analysis machinery that is used to analyze network protocols to the analysis of of data flows between different parts of the program.

\subsection{Communication between basic blocks}\label{sec:communication}

We analyze the program at the level of basic blocks.  This granularity is convenient because execution is linear within a basic block (i.e., execution always proceeds to the next instruction in memory), so we need not analyze control flow within basic blocks, and the initial state of a basic block is invariably the final state of another basic block.

Data flows into basic blocks in three main ways: from outside, through registers, and through memory.

\subsubsection{Communication from outside}
Most programs will process data received from outside.  The integrity of the program's data therefore depends on its ability to ensure the integrity of the data that it reads in.

The main way for a program to obtain outside data is using system calls, transferring control to the \gls{os}, which performs a task and returns control to the program.  The integrity of a system call result returned via register is assured even in the \afast model.  

A common arrangement is for system calls to write the result to a region of memory specified by the program, and return an error code in a register.  The integrity of this data is not guaranteed against \afast, but can be against \aslow, since the data can be read from memory during the same basic block as it was written by the system call.  The same holds for \asingle, if vulnerable code cannot execute between the time of the system call and the time when the data is read.

Even against \afast, it is possible to ensure the integrity of data from outside the program using cryptography.  If an end-to-end authenticated channel is terminated inside the program, or if cryptographically-authenticated data is read from storage, then its integrity can be verified once it is finally loaded into registers.

\subsubsection{Communication via registers}
Once data has been obtained, it must flow throughout the program.  After jumping from one basic block to another, the register state from the previous block will be preserved, thus allowing data to flow between them.  The integrity of this data flow is assured by the adversary's inability to write to registers directly.  Authentication is provided by the \gls{cfi} mechanism: the initial register state of a basic block can be the final state of any basic block capable of jumping to it.  The predecessors of blocks not marked as indirect branch targets by the \gls{cfi} mechanism can be identified using program analysis---since they must terminate with a direct branch pointing to the start of the block---allowing assumptions to be made about the initial register state.  The predecessors of indirect branch targets can be any basic block terminating in an indirect branch instruction.  Finer-grained \gls{cfi} mechanisms such as that by~\cite{liljestrand2019pac} can further limit the sources of indirect branches, allowing more assumptions to be made about basic blocks' initial register states.

Registers are therefore useful for the transfer of data between basic blocks, but their limited number means that bulk data must be transferred in memory.

\subsubsection{Communication via memory}\label{sec:justification-memory}
Unlike communication via registers, basic blocks can transfer large amounts of data via memory; however, said data can be overwritten by \afast or \aslow.

Since the values read from memory can be chosen arbitrarily by the adversary, another approach is needed to ensure its authenticity.  To remedy this, we propose the use of cryptographically-authenticated data structures, which will allow their users to authenticate data from memory.  Restricting the usage of each data structure to particular basic blocks allows readers of memory to be sure of the basic blocks from which the data originated.

\asingle is more limited than \afast and \aslow, as they cannot alter values not yet written to memory when the vulnerable code is executed. This allows data to be safely transferred via memory, as long as the program's control flow does not pass through the vulnerable code between the time that the data is written and the time that it is read.  E.g., functions that appear only deeper the vulnerable code in the call graph, can safely communicate with one another.  This requires that the developer have some knowledge of where vulnerable code is likely to be, and general data transfer between basic blocks requires the same cryptographic methods as for \afast and \aslow.

The authenticated data structures that we propose in \Cref{sec:design} reduce the state of the data structure to a single \emph{state \gls{mac}} that can be stored in a register.  However, register space is still limited, and a whole program's data structures' state \glspl{mac} cannot be kept in registers throughout execution.  To overcome this, a program can use cryptographically secure data structures recursively: one secure data structure has its state \gls{mac} kept in memory, while the others store their state \glspl{mac} in this top-level data structure.  The programmer then ensures that the top-level state \gls{mac}'s register is used only by the program's secure data structure instrumentation, preventing the adversary from overwriting it by directing the program's control flow to a function that uses this register for other computation.  Each data structure must also include a nonce in its cryptographic computation, to keep \glspl{mac} from one data structure from being replayed against a basic block that attempts to access another; this nonce can be stored in memory alongside the data structure in question, as it will be authenticated by the data structure's state \gls{mac}.

\section{Authenticated data structures}\label{sec:design}

We saw in \Cref{sec:communication} that registers allow for only limited secure data flows between basic blocks.  In order to transfer bulk data safely, we use cryptographic techniques to reduce bulk data to a single \gls{mac} that fits into registers.

\subsection{Securing the state \glspl{mac}}\label{stackdesign}
In \Cref{sec:justification-memory}, we proposed that one global data structure be used to store the state \glspl{mac} of each structure.  This data structure needs to provide efficient authenticated random access to the list of state \glspl{mac}.

In general, a Merkle tree can provide an authenticated region of memory of size $n$, with random access requiring $\mathcal{O}(\log n)$ computation.   Our design uses a Merkle tree~\cite{merkle1980protocols} over a `safe storage' region of memory to reduce many state \glspl{mac} to a single \gls{mac} to be stored in a reserved register.

A generic strategy to implement any data structure is to implement it as usual atop an authenticated region of memory, reduced to a single \glspl{mac} by a Merkle tree; however, the $\mathcal{O}(\log n)$ overhead means that this approach is asymptotically slower than the same data structure without authentication.  To eliminate this overhead, we design authenticated versions of specific data structures with cryptography tailored to their access pattern.

In the remainder of this section, we will describe optimized designs for authenticated stacks and queues; the detailed algorithmic descriptions are given in \Cref{sec:spec}.  We have also designed and implemented a authenticated red-black-tree~\cite[p.~308]{clrs}, which we omit here for lack of space.
\subsection{Secure Stack}
Stacks store and retrieve data in \gls{lifo} order. Data is inserted into and read from the top of the stack. The basic operations for a stack are as follows:
\begin{center}
\begin{tabular}{rl}
    \textbf{push} & Pushes data to the top of the stack \\
    \textbf{pop} & Pops the top element out of the stack \\
    \textbf{top} & Returns the element at the top of the stack \\
    \textbf{size} & Returns the number of elements in the stack
\end{tabular}
\end{center}
All stack operations require only $\mathcal{O}(1)$ computation.

For each value added to the stack, a data \gls{mac} is calculated from the data value, the size of the stack, the nonce, the \gls{mac} of the next highest value, and the key $k$:
\begin{align*}
    \mathrm{MAC}_i = \mathsf{MAC}(H(\text{data}), \text{nonce}, \text{size}, \mathrm{MAC}_{i-1}; k) .
\end{align*}

This \gls{mac} is stored in memory, along with the data itself and the nonce used to distinguish \glspl{mac} from different stack instances.
Because the topmost \gls{mac} recursively incorporates all other data \glspl{mac} in the structure, it can serve as the state \gls{mac}, and securing it in a register is sufficient to safely verify the integrity of all the other \glspl{mac}.

In a stack, data is always read from the top, so we can verify the integrity of the topmost value in $\mathcal{O}(1)$ time using the topmost \gls{mac}.  By verifying the top-most data \gls{mac} $\mathrm{MAC}_{i}$, we are also assured of the integrity of the \gls{mac} of the next value $\mathrm{MAC}_{i-1}$, as shown in \Cref{alg:stacktop}; this can then replace the topmost \gls{mac} if the value is to be removed from the stack, as shown in \Cref{alg:stackpop}.
To push a new element onto the stack, a new \gls{mac} $\mathrm{MAC}_{i+1}$ replaces the topmost \gls{mac} in safe storage.

This approach requires $\mathcal{O}(1)$ computation per operation, the same as the operations of a normal stack~\cite[p.~233]{clrs}.

\subsection{Secure Queue}\label{queuedesign}
Queues write and read data in \gls{fifo} order, with elements pushed to the back of the queue and popped from the front:
\begin{center}
\begin{tabular}{rl}
    \textbf{enqueue} & Inserts the elements to the back of the queue \\
    \textbf{dequeue} & Removes the element at the front of the queue \\
    \textbf{front} & Returns the element at the front of the queue \\
    \textbf{back} & Returns the element at the back of the queue \\
    \textbf{size} & Returns the number of the elements in the queue
\end{tabular}
\end{center}
These operations require only $\mathcal{O}(1)$ computation~\cite[p.~235]{clrs}.

The queue differs from the stack in that it has two points of access to the stored data: one at the front, and one at the back.  This means that we cannot use a chained \gls{mac} structure efficiently as in \Cref{stackdesign}, as enqueueing or dequeueing an element will require that all of the hashes in one chain be updated, increasing enqueueing and dequeueing times in an $n$-element queue from $\mathcal{O}(1)$ to $\mathcal{O}(n)$.

The queue admits an alternative implementation that maintains its performance characteristics.  The \glspl{mac} associated with each stored value are not ordered by a \gls{mac} chain, but by an index $i$ incorporated into each data \gls{mac}
\[\mathrm{MAC}_i = \mathsf{MAC}(H(\text{data}), \text{nonce}, i; k) ,\]
which is stored in memory along with the associated data.

Then, only the indices of the oldest and newest values in the queue must be secured; they can be reduced to one state \gls{mac} to be kept in safe storage:
\[\text{state} = \mathsf{MAC}(\text{nonce}, \text{back-index}, \text{front-index}; k). \]

When a new value is enqueued, or a stored value dequeued, the respective index is incremented, and the queue state \gls{mac} updated, shown in \Cref{alg:queueenq,,alg:queuedeq}.

Reading from the queue with \emph{enqueue}, \emph{dequeue}, \emph{front}, or \emph{back} takes place by checking the value's \gls{mac} $\mathrm{MAC}_i$, and verifying its relative position by checking the queue's state \gls{mac} as in~\Cref{alg:queuefront,alg:queueback}.
The number of elements in the queue can be determined by subtracting the front- and back-indices from one another as shown in \Cref{alg:queuesize}.

As with the stack, \glspl{mac} from different queue instances are made distinguishable from one another by incorporating the instance-specific nonce into each \gls{mac}.

\section{Implementation}
\label{sec:implementation}

We developed a proof-of-concept to evaluate the performance and backwards-compatibility of our approach\footnote{Source code will be made available at \url{https://github.com/ssg-research/authenticated-data-structures}.}.

Our authenticated data structures support arbitrary element types, including complex objects.
Our default element hashing algorithm uses software-only SHA-2, and treats objects as flat data-structures.
Complex data-structures is supported by allowing the default hash computation to be replaced by the programmer.

To generate the \glspl{mac} and Merkle tree root computations we use architecture-specific calculations.
On x86-64, we use AES-NI instructions to implement the CMAC \gls{mac} algorithm~\cite{rfc4493}.
On Aarch64, we use \gls{pa}'s \texttt{pacga} instruction to compute a \gls{mac} over the data 64-bits at a time.
In both cases, the \gls{mac} key is generated at program start, and stored in registers \texttt{xmm5-xmm15} on x86 (which we reserve for key storage), and in dedicated \gls{pa} key registers on ARM.

The Merkle tree string state \glspl{mac} is kept in thread-local storage, and its root \gls{mac}  register \texttt{r13} on x86-64 and \texttt{x28} on AArch64.  
To prevent manipulation, these registers are reserved and cannot be used by other code.
Since threads cannot share registers, our proof-of-concept authenticated data structures cannot be shared between threads.

\noindent\textbf{C++ \glsdesc{api}}\label{sec:object-wrappers},
The authenticated stack and queue subclass \texttt{std::stack} and \texttt{std::queue} from the C++ standard library, respectively.
They incorporate a second container of the parent type, which holds the associated \glspl{mac}.
The operations described in \Cref{sec:design} are then implemented in terms of operations on the container and its \gls{mac} storage structure.

We follow the C++ container library \gls{api} to maintain compatibility with existing code.
However, the \gls{api} provides direct access to contained objects by returning a pointer which can then be used to modify its values without using the container \gls{api} without updating the \gls{mac}, causing an integrity check to fail when the modified element is accessed through the \gls{api}.

We address this by modifying the stack \gls{api} to return a smart pointer to elements in the data structure, which allows the programmer to manually trigger a \gls{mac} update after modifications to the element.

\section{Evaluation}
\subsection{Performance \& Compatibility}\label{sec:performance-evaluation}

We have tested the performance of our implementation both with microbenchmarks of individual operations, and in a larger code-base by using it to replace data structures in the OpenCV performance tests and sample applications. 

\begin{table}[]
    \centering
    \begin{tabular}{|c|c|c|c|}
        \hline
         Benchmark & Unauthenticated & Authenticated & Overhead  \\
         \hline
         Stack & \SI{11.211 \pm 0.019}{\nano\second} & \SI{16.853 \pm 0.964}{\micro\second} & \SI{1503}{\times} \\
         Queue & \SI{11.128 \pm 0.032}{\nano\second} & \SI{16.793 \pm 0.753}{\micro\second} & \SI{1509}{\times} \\
         \hline
    \end{tabular}
    \vspace{0.25em}
    \caption{Mean execution times for authenticated data structure operations and 95\% confidence intervals. Each benchmark (500 insertions, 500 removals) was run $10^4$ times.}
    \label{tbl:benchmarks}
\end{table}

In order to test the overhead of a single operation, we measured the overhead of integer insertion operations in the original and authenticated data structures, shown in \Cref{tbl:benchmarks}.

Despite this overhead for individual operations, real programs do not suffer unacceptable slowdown.  We built OpenCV with \texttt{gcc} 11.1.1, modified to use our secure stack and queue implementations by default.  OpenCV provides performance tests that can be used to test the performance impact of our secure data structures. We use the C++ samples provided by OpenCV source code to measure the performance overhead in a more representative application.  We were able to use our implementation of the stack and queue as drop-in replacements for the C++ standard library implementations of the methods described in \Cref{sec:design}; however, as we are able to do so only for structures used within a single thread, and some other methods require the use of the smart pointers from \Cref{sec:object-wrappers}, we conclude that the compatibility requirement is partially met.

We ran three times all OpenCV core performance tests that did not use functionality containing inline assembly overwriting the reserved registers, using authenticated and unauthenticated data structures. For each test run, we compute geometric means of the ratios of the running times for the authenticated and unauthenticated data structures, yielding a \textbf{3.42\% overhead} for the methods in \Cref{sec:design}, and \textbf{6.42\%} with tests modified to use smart pointers for authenticated random access.

\subsection{Security}\label{sec:security-evaluation}

Our security requirement states that the authenticated data structures must behave according to their specification, or raise an error.  In practice, this means that once push or enqueue operations are invoked to insert a value, then pop or dequeue operations must yield the same values.

We briefly sketch arguments for the security of each data structure; detailed proofs appear in \Cref{sec:security-proofs}.

The security of the stack derives from the fact that the stack's state \gls{mac} indirectly incorporates every value in the chain.  Verification of the highest value's \gls{mac} authenticates both the most topmost value, as well as the \gls{mac} of the second-topmost value, due to the presumed collision-resistance of the \gls{mac}.  The security of the remaining values is authenticated recursively by the same argument.

The queue's state \gls{mac} authenticates the indices of the oldest and newest values in the queue.  Thus, enqueue and dequeue operations will always use sequential front-indices and back-indices when adding and removing values from the queue, respectively.  The enqueue operation will therefore produce only a single data \gls{mac} with any particular combination of nonce and index, and by the collision-resistance of the \gls{mac}, this will authenticate the enqueued value.  Since only one valid \gls{mac} with a given index and nonce will ever be produced, the dequeue operation can be certain that a valid \gls{mac} authenticates the correct value for this index, and therefore will always yield the correct value.

We therefore conclude that our designs meet the security requirement:  authenticated stacks and queues will either behave as a stack or queue from the perspective of software that invokes their interfaces, or they will raise an error.

The security of our implementations depend on their ability to securely execute the algorithms specified in our design despite the interference of the adversary.  If we assume our implementation and any caller-overridden element-hashing function to not contain any vulnerabilities, then \asingle cannot interfere with its execution of the data structure algorithms.  Limitations of the C++ programming language mean that we cannot guarantee the security of the implementation against \afast or \aslow, as we do not know how it will be split across basic blocks, as discussed further in \Cref{sec:compiler-capabilities}.

\section{Related work}

Cryptographically-authenticated data structures have already seen widespread use on the internet, especially during the 2010s, when web cryptography became pervasive: lists of X.509 certificates~\cite{x509}, linked by signatures, are combined with Merkle-tree-based Certificate Transparency logs~\cite{rfc6962} to protect the vast majority of web traffic today.

While traditional \gls{cfi} approaches (\Cref{sec:background-cfi} work by verifying that control-flow transitions are expected; a recent approach is to instead protect the integrity of control-data.
This can be done using isolation, such as shadow stacks~\cite{chiueh2001rad} or a safe stack~\cite{kuznetsov2014codepointer}, but later approaches such as CCFI~\cite{mashtizadeh2015ccfi} employ cryptography to ensure integrity of control-data.
ARM \gls{pa} (\Cref{sec:background-pa}) provides hardware-support for CCFI-like pointer integrity using \glspl{mac}, and has been shown to achieve strong security return addresses by storing return addresses in a cryptographically verifiable stack similar to our secure stack implementation~\cite{liljestrand2021pacstack}.

\section{Discussion}\label{sec:discussion}

\subsection{Multithreading}
In \Cref{sec:justification} we discussed how different parts of the program can safely communicate via registers as a thread jumps between basic blocks.  However, this approach cannot be used for communication between threads, since they do not share registers.  This means that our implementations from \Cref{sec:implementation} are appropriate only when the data structures are used by only a single thread.  Reads by other threads will fail, as their registers will not contain the correct state \gls{mac}.

A multi-threaded data structure implementation needs to replicate the hash between all of the threads that access it.  This may be achieved in several ways: consensus protocols provide safe replication, but their large numbers of synchronization points will be slow.  \Gls{os} support may provide better performance, at the cost of backwards compatibility.

\subsection{Compiler capabilities}\label{sec:compiler-capabilities}
The design in \Cref{sec:design} can secure data against all of the adversaries from \Cref{sec:adversary-model}, but whether a program is secure against \afast, \aslow, or \asingle depends on its implementation.  However, it is challenging to write a program in a high-level language such that the compiler will generate secure code that is secure against any of these adversaries: compilers will spill registers to memory whenever needed, with the programmer having no way to prevent this.  Functions that are able to accept large data structures may force the caller to write them into memory, even when they are small enough to be passed in a register.  This is a problem even within our implementation, as hash functions must accept data of arbitrary size.  Programmers may indicate that functions are to be inlined, meaning that calls to secure data structure methods can be placed into the same basic block as the calling code, allowing arguments to be passed securely despite \aslow.

To overcome these issues will require compiler and possibly language support.  Important features are the ability to reliably pin values to register storage, perhaps allowing automatic spillage to memory using an authenticated data structure, and a secure calling convention that will allow larger quantities of data to be streamed via registers or passed via memory with cryptographic protection.  Such advances will allow programmers to more confidently write code that will be secure against the more powerful \afast and \aslow.

\ifCLASSOPTIONcompsoc
\else
\fi
\bibliographystyle{plain}
\bibliography{IEEEabrv,paper}

\begin{thebibliography}{10}

\bibitem{abadi2005controlflow}
Mart{\'i}n Abadi, Mihai Budiu, {\'U}lfar Erlingsson, and Jay Ligatti.
\newblock Control-flow integrity.
\newblock In {\em 12th {{ACM Conference}} on {{Computer}} and {{Communications
  Security}}}, {{CCS}} '05, page 340, {Alexandria, VA, USA}, 2005. {ACM Press}.

\bibitem{bletsch2011jumporiented}
Tyler Bletsch, Xuxian Jiang, Vince~W. Freeh, and Zhenkai Liang.
\newblock Jump-oriented programming: A new class of code-reuse attack.
\newblock In {\em Proceedings of the 6th {{ACM}} Symposium on Information,
  Computer and Communications Security}, {{ASIACCS}} '11, pages 30--40, {Hong
  Kong, China}, 2011. {ACM}.

\bibitem{chiueh2001rad}
Tzi-cker Chiueh and Fu-Hau Hsu.
\newblock {{RAD}}: A compile-time solution to buffer overflow attacks.
\newblock In {\em Proceedings of the 21st {{International Conference}} on
  {{Distributed Computing Systems}}}, {{ICDCS}} '01, pages 409--417, {Mesa, AZ,
  USA}, April 2001.

\bibitem{clang14-cfi}
{Clang team}.
\newblock Control flow integrity \textemdash{} {{Clang}} 14 documentation,
  2022.

\bibitem{clrs}
Thomas~H Cormen, Charles~E Leiserson, Ronald~L Rivest, and Clifford Stein.
\newblock {\em Introduction to Algorithms}.
\newblock {MIT Press}, 3rd edition, 2009.

\bibitem{gueron2012intel}
Shay Gueron.
\newblock {{Intel}}\textregistered{} {{Advanced Encryption Standard}} ({{AES}})
  new instruction set.
\newblock Technical Report 323641-001, {Intel Corporation}, September 2012.

\bibitem{amd-bulldozer-pilediver}
Brent Hollingsworth.
\newblock {{AMD}}; new {{Bulldozer}} and {{Piledriver}} instructions.
\newblock Technical report, October 2012.

\bibitem{hu2016dataoriented}
Hong Hu, Shweta Shinde, Sendroiu Adrian, Zheng~Leong Chua, Prateek Saxena, and
  Zhenkai Liang.
\newblock Data-oriented programming: On the expressiveness of non-control data
  attacks.
\newblock In {\em Proceedings of the 2016 {{IEEE Symposium}} on {{Security}}
  and {{Privacy}}}, {{SP}} '16, pages 969--986, {San Jose, CA, USA}, 2016.

\bibitem{x509}
{International Telecommunications Union}.
\newblock Recommendation {{ITU-T X}}.509 - the directory: Public-key and
  attribute certificate frameworks.
\newblock Technical report, October 2019.

\bibitem{rfc4493}
Tetsu Iwata, Junhyuk Song, Jicheol Lee, and Radha Poovendran.
\newblock The {{AES-CMAC}} algorithm.
\newblock Request for {{Comments}} RFC 4493, {Internet Engineering Task Force},
  June 2006.

\bibitem{kornau2009return}
Tim Kornau.
\newblock {\em Return Oriented Programming for the {{ARM}} Architecture}.
\newblock Diplomarbeit, Ruhr-Universit\"at Bochum, {Bochum, Germany}, December
  2009.

\bibitem{kuznetsov2014codepointer}
Volodymyr Kuznetsov, L{\'a}szl{\'o} Szekeres, and Mathias Payer.
\newblock Code-pointer integrity.
\newblock In {\em Proceedings of the 11th {{USENIX}} Symposium on {{Operating
  Systems Design}} and {{Implementation}}}, {{OSDI}} '14, pages 147--163,
  {Broomfield, CO, USA}, 2014.

\bibitem{rfc6962}
Ben Laurie, Adam Langley, and Emilia Kasper.
\newblock Certificate transparency.
\newblock Request for {{Comments}} RFC 6962, {Internet Engineering Task Force},
  June 2013.

\bibitem{liljestrand2021pacstack}
Hans Liljestrand, Thomas Nyman, Lachlan~J. Gunn, Jan-Erik Ekberg, and
  N.~Asokan.
\newblock {{PACStack}}: An authenticated call stack.
\newblock In {\em Proceedings of the 30th {{USENIX Security Symposium}}
  ({{USENIX Security}} '21)}, pages 357--374, {Virtual}, 2021. {USENIX
  Association}.

\bibitem{liljestrand2019pac}
Hans Liljestrand, Thomas Nyman, Kui Wang, Carlos~Chinea Perez, Jan-Erik Ekberg,
  and N~Asokan.
\newblock {{PAC}} it up: Towards pointer integrity using {{ARM}} pointer
  authentication.
\newblock In {\em Proceedings of the 28th {{USENIX Security Symposium}}
  ({{USENIX Security}} '19)}, {{USENIX Security}} '19, pages 177--194, {Santa
  Clara, CA, USA}, August 2019. {USENIX Association}.

\bibitem{mashtizadeh2015ccfi}
Ali~Jose Mashtizadeh, Andrea Bittau, Dan Boneh, and David Mazi{\`e}res.
\newblock {{CCFI}}: Cryptographically enforced control flow integrity.
\newblock In {\em Proceedings of the 22nd {{ACM SIGSAC Conference}} on
  {{Computer}} and {{Communications Security}}}, {{CCS}} '15, pages 941--951,
  {Denver, CO, USA}, 2015. {ACM Press}.

\bibitem{menezes2018handbook}
Alfred~J Menezes, Paul~C Van~Oorschot, and Scott~A Vanstone.
\newblock {\em Handbook of Applied Cryptography}.
\newblock {CRC press}, 2018.

\bibitem{merkle1980protocols}
Ralph~C. Merkle.
\newblock Protocols for public key cryptosystems.
\newblock In {\em Proceedings of the 1980 {{IEEE Symposium}} on {{Security}}
  and {{Privacy}}}, {{SP}} '80, pages 122--122. {IEEE Computer Society}, April
  1980.

\bibitem{microsoft-dep}
{Microsoft}.
\newblock Data {{Execution Prevention}}, 2022.

\bibitem{miller2019trends}
Matt Miller.
\newblock Trends, challenges and strategic shifts in the software vulnerability
  mitigation landscape, February 2019.

\bibitem{peslyak1997getting}
Alexander (Solar~Designer) Peslyak.
\newblock Getting around non-executable stack (and fix), August 1997.

\bibitem{qualcomm2017pointer}
{Qualcomm}.
\newblock Pointer authentication on {{ARMv8}}.3: Design and analysis of the new
  software security instructions.
\newblock Technical report, 2017.

\bibitem{shacham2007geometry}
Hovav Shacham.
\newblock The geometry of innocent flesh on the bone: Return-into-libc without
  function calls (on the x86).
\newblock In {\em Proceedings of the 14th {{ACM Conference}} on {{Computer}}
  and {{Communications}} Security}, volume~22 of {\em {{CCS}} '07}, pages
  552--561, {Alexandria, VA, USA}, 2007. {ACM}.

\bibitem{spafford1989internet}
Eugene~H. Spafford.
\newblock The internet worm program: An analysis.
\newblock {\em ACM SIGCOMM Computer Communication Review}, 19(1):17--57,
  January 1989.

\end{thebibliography}

\appendix

\subsection{Authenticated stack specification}\label{sec:spec}

\begin{algorithm}
	
	\caption{authenticated-stack.top()} \label{alg:stacktop}
	\begin{algorithmic}[1]
	\item x = data-stack.top()
    \item \text{state-mac} = get-state-mac()
    \If{state-mac == $\mac_k$(hash(x), nonce, size, mac-stack.top())}
        \Return x
    \Else
        \State exception("MAC authentication error.")
    \EndIf
	\end{algorithmic} 
\end{algorithm}

\begin{algorithm}
	
	\caption{authenticated-stack.pop()} \label{alg:stackpop}
	\begin{algorithmic}[1]
	\item x = data-stack.top()
    \item \text{state-mac} = get-state-mac()
    \If{state-mac == $\mac_k$(hash(x), nonce, size, mac-stack.top())}
        \State data-stack.pop()
        \State insert(mac-stack.top())
        \State mac-stack.pop()
    \Else
        \State exception("MAC authentication error.")
    \EndIf
	\end{algorithmic} 
\end{algorithm}

\begin{algorithm}
	\caption{authenticated-stack.push(x)} \label{alg:stackpush}
	\begin{algorithmic}[1]
	 \item \text{state-mac} = get-state-mac()
    \item data-stack.push(x)
    \item size = size + 1
    \item \text{mac-stack}.push(\text{state-mac})
    \item \text{state-mac} = $\mac_k$(hash(x), nonce, size, state-mac)
    \item \text{insert(state-mac)}
	\end{algorithmic} 
\end{algorithm}

\begin{algorithm}
	
	\caption{authenticated-stack.size()} \label{alg:stacksize}
	\begin{algorithmic}[1]
	\If{size $\neq$ 0}
	\State top()
	\Return size
	\Else
	  \If{nonce == get-state-mac()}
	  \\
        \Return size
    \Else
        \State exception("MAC authentication error.")
    \EndIf
	\EndIf
	\end{algorithmic} 
\end{algorithm}

\FloatBarrier
\subsection{Authenticated queue specification}

\begin{algorithm}
	\caption{authenticated-queue.enqueue(x)} \label{alg:queueenq}
	\begin{algorithmic}[1]
    \If{$\mac_k$(nonce, front-index, back-index) $\neq$ get-state-mac()}
        \State exception("MAC authentication error.")
    \EndIf
    \item data-queue.enqueue(x)
    \item back-index = back-index + 1
    \item \text{mac-queue}.enqueue($\mac_k$(hash(x), nonce, back-index))
    \item \text{insert($\mac_k$(nonce, front-index, back-index))}
	\end{algorithmic} 
\end{algorithm}
	\begin{algorithm}
	\caption{authenticated-queue.dequeue()} \label{alg:queuedeq}
	\begin{algorithmic}[1]
	\If{$\mac_k$(nonce, front-index, back-index) $\neq$ get-state-mac()}
        \State exception("MAC authentication error.")
    \EndIf
	\item data-queue.dequeue()
	\item mac-queue.dequeue()
	\item front-index = front-index + 1
	\item \text{insert($\mac_k$(nonce, front-index, back-index))}
	\end{algorithmic} 
	\end{algorithm}
	
	\begin{algorithm}
	\caption{authenticated-queue.front()} \label{alg:queuefront}
	\begin{algorithmic}[1]
	\If{$\mac_k$(nonce, front-index, back-index) $\neq$ get-state-mac()}
        \State exception("MAC authentication error.")
    \EndIf
	\item x = data-queue.front()
    \If{mac-queue.front() == $\mac_k$(hash(x), nonce, front-index)}
        \Return x
    \Else
        \State exception("MAC authentication error.")
    \EndIf
	\end{algorithmic} 
\end{algorithm}
	
\begin{algorithm}
	\caption{authenticated-queue.back()} \label{alg:queueback}
	\begin{algorithmic}[1]
	\If{$\mac_k$(nonce, front-index, back-index) $\neq$ get-state-mac()}
        \State exception("MAC authentication error.")
    \EndIf
	\item x = data-queue.back()
    \If{mac-queue.back() == $\mac_k$(hash(x), nonce, back-index)}
        \Return x
    \Else
        \State exception("MAC authentication error.")
    \EndIf
	\end{algorithmic} 
\end{algorithm}

\begin{algorithm}
	\caption{authenticated-queue.size()} \label{alg:queuesize}
	\begin{algorithmic}[1]
	\If{$\mac_k$(nonce, front-index, back-index) $=$ get-state-mac()}
	\\
        \Return back-index - front-index + 1
    \Else
        \State exception("MAC authentication error.")
    \EndIf
	\end{algorithmic} 
\end{algorithm}
\FloatBarrier

\subsection{Proofs of security}\label{sec:security-proofs}

We first introduce a security game $\textrm{MAC-Collision}^\adv_{\mac_k}$, which is used to define the collision-resistance property of the process-specific \gls{mac} $\mac_k$. 
 
 \begin{tcolorbox}
\begin{pcimage}
\procedureblock[linenumbering,space=auto, mode=pseudocode  ]{$\textrm{MAC-Collision}^\adv_{\mac_k}(q)$}{
	\pcfor i \in {1,...,q}\\
    (x, y) \leftarrow \adv.choose()\\
    \adv.receive(\mac_k(x,y))\\
	\pcendfor\\
	(x'', x', y) \leftarrow A.\text{gen-collision}()\\
	\pcif x' \neq x'' \wedge \mac_k(x',y) = \mac_k(x'',y)\\
	\pcreturn 1\\
	\pcendif\\
	\pcreturn 0
}
 \end{pcimage}
\end{tcolorbox}

 The \gls{mac} used in our implementation from \Cref{sec:implementation} is computed using \gls{pa}'s \texttt{pacga} instruction on the Aarch64 platform, and an AES-NI-based \gls{cmac} implementation on the x86 platform. Assuming that both functions are pseudo-random with respect to their keys, the most efficient way for the adversary to find a collision is through brute force. According to \cite{menezes2018handbook}, the probability of the adversary finding a collision in a $b$-bit \gls{mac} after collecting $q$ \glspl{mac} is
 \[ Pr[\textrm{MAC-Collision}^\adv_{\mac_k}(q) = 1] = 1 - \frac{2^b!}{(2^b - q)!2^{q.b}} , \]
  and on average, the adversary would find a collision after collecting
 \[ q = \sqrt{\frac{\pi 2^b}{2}} \; \text{\glspl{mac}}.\]
 
 Therefore, if the length of the \glspl{mac} is long enough, we can assume that to find a collision is small. The value of $b$ is 32~bits for Arm \gls{pa} and 128~bits for AES-NI.

We now proceed to show that \adv's goal of successfully corrupting an authenticated data structure can be reduced to finding a collision.

\subsubsection{Stack}
We define a series of games to prove the security of the authenticated stack. These games demonstrate a scenario in which \adv\xspace uses their control over memory to attempt to change the values in the stack to those with a different hash, without being detected.

$\mathbf{\textbf{Stack-Game-MAC}^\adv_1.}$ This is an attack game against the integrity of the authenticated stack.  The game consists of two parts:
In the first loop, \adv\xspace chooses values to be pushed to the stack, and receives the corresponding state \gls{mac}. \adv\xspace can also empty the stack as needed, to start again from an empty stack.

In the second loop, \adv~carries out their attack, attempting to replace at least one of the elements in the stack with one of a different hash, such that the \glspl{mac} still verify successfully. Otherwise \adv\xspace loses, as either they have failed to change any values to those of a different hash, or they have been detected, causing the program to crash.

\begin{tcolorbox}
 \begin{gameproof}[name=\text{Stack-Game-MAC}^\adv, arg=(q)]
\begin{pcvstack}[ center]
\gameprocedure[ linenumbering,space=auto, mode=pseudocode ,  valign ]{
\label{my:line:stack1}
\pccomment{The { \text{first}} steps represent authenticated stack initialization.}\\
\pccomment{The data- and MAC- stacks are unprotected stacks.}\\
\text{macs-stack} \leftarrow []\\
\text{data-stack} \leftarrow []\\
\text{nonce} \sample \text{random}\\
\text{mac-in-register} \leftarrow \text{nonce}\\
\pccomment{pushed-values is a list that stores the values pushed into }\\
\pccomment{the stack to be later used in the attack loop.}\\
\text{pushed-values} = []
}
\end{pcvstack}
\end{gameproof}
\end{tcolorbox}
\begin{tcolorbox}

 \begin{gameproof}[name= \text{Stack-Game-MAC}^\adv, arg=(q), nr=0]

\begin{pcvstack}[ center ]
\gameprocedure[lnstart=9, linenumbering,space=auto, mode=pseudocode , valign ] {
\pccomment{In the following loop, \adv\xspace experiments with the }\\
\pccomment{stack by pushing $n$ values and} \\
\pccomment{receiving the corresponding top \gls{mac}. After each round of $n$ }\\
\pccomment{push operations, the stack will be emptied to}\\
\pccomment{allow the same process again. The mac validation steps have}\\
\pccomment{been omitted since \adv's }\\
\pccomment{goal is not to attack at this stage.}\\
\pcfor i \in {1,...,q} \pcdo \\
n \leftarrow \adv.\text{stack-choose-n}()\\
\pcfor j \in {1,...,n} \pcdo \\
x \leftarrow \adv.\text{stack-choose}() \\
\pccomment{The next steps represent a push operation.}\\
\text{data-stack}.\text{push}(x)\\
\text{size} \leftarrow \text{size} + 1\\ 
\text{mac-stack}.\text{push}(\text{mac-in-register})\\
\text{mac-in-register} \leftarrow \\ \pcind\mac_k(x, \text{nonce}, \text{size}, \text{mac-in-register})\\
\pccomment{x is inserted {\normalfont in the pushed-values list to keep track of}}\\
\pccomment{ the pushed values.}\\
\text{pushed-values}.\text{insert}(x)\\
\pcendfor \\
\pccomment{\adv\xspace receives the state \gls{mac} and then the}\\
\pccomment{{\normalfont stack is emptied back to its initial state.}}\\
\adv.\text{stack-receive}(\text{mac-in-register})\\
\text{macs-stack} \leftarrow []\\
\text{data-stack} \leftarrow []\\
\text{mac-in-register} \leftarrow \text{nonce}\\
\text{pushed-values} = []\\
\pcendfor \\
\pccomment{{\normalfont In this loop, \adv\xspace attempts to violate the}}\\
\pccomment{integrity of the stack by replacing data and its \gls{mac}.}\\
\pccomment{{\normalfont If the returned data is different from what was originally}}\\
\pccomment{{\normalfont pushed, and the \glspl{mac} verify, \adv}}\\
\pccomment{{\normalfont wins the game; otherwise, they lose.}}\\
\pcforeach x' \in \text{pushed-values}\\
(x'', y) \leftarrow \adv.\text{stack-attack}()\\
\pcif x' \neq x''\\
\pcif \mac_k(x'', y) = \text{mac-in-register}\\
\pcreturn 1\\
\pcelse\\
\pcreturn 0\\
\pcendif\\
\pcendif\\
\pccomment{{\normalfont The next line updates the state and has no}}\\
\pccomment{ effect on the probability of \adv\xspace winning the game.}\\
\text{mac-in-register} \leftarrow \text{macs-stack}.\text{pop}()\\
\pcendif\\
\pcreturn 0
}
 \end{pcvstack}
 \end{gameproof}
\end{tcolorbox}

$\mathbf{\textbf{Stack-Game-MAC}^{\bdv^{\adv}}_2.}$
We introduce a second game, $\textrm{Stack-Game-MAC}^{\bdv^\adv}_2$, which can be reduced to the ${\text{MAC-Collision}^\adv_{\mac_k}(q)}$ game. For this purpose, we also introduce a new adversary, $\bdv^{\adv}$, and replace $\adv$ in $\text{Stack-Game-MAC}_1$ with $\bdv^{\adv}$.

\begin{tcolorbox}

\begin{gameproof}[name= \text{Stack-Game-MAC}^{\bdv^{\adv}}, arg=(q), nr=1]

\begin{pcvstack}[ center]
\gameprocedure[linenumbering,space=auto, mode=pseudocode ,  valign ] {
\label{my:line:stack-mac-attack2-begin}
\bdv^{\adv}.\text{stack-init}()\\
\text{pushed-values} = []\\
\pccomment{Here, \adv\xspace \space is replaced with $\bdv^{\adv}$ which performs the update }\\
\pccomment{steps {\normalfont for} the data structure but cannot} \\
\pccomment{calculate the \glspl{mac}.}\\
\pcfor i \in {1,...,q} \pcdo \\
n \leftarrow \bdv^{\adv}.\text{stack-choose-n}()\\
\pcfor j \in {1,...,n} \pcdo \\
(x, y) \leftarrow \bdv^{\adv}.\text{stack-choose}()\\
\text{mac-in-register} \leftarrow \mac_k(x, y)\\
\text{pushed-values}.\text{push}(x)\\
\pcendfor \\
\pccomment{ in the next step, $\bdv^{\adv}$ receives the top MAC and}\\
\pccomment{ resets the stack. }\\
\bdv^{\adv}.\text{stack-receive}(\text{mac-in-register})\\
\pcendfor\\
\pccomment{Again, \adv\xspace \space is replaced with $\bdv^{\adv}$ who performs the attack}\\
\pccomment{ and state-updating steps.}\\
(x'', y, mac) \leftarrow \bdv^{\adv}.\text{stack-attack}()\\
\label{my:line:stack-mac-attack2}
\pcif \mac_k(x'', y) = mac\\
\pcreturn 1\\
\pcelse\\
\pcreturn 0\\
\pcendif\\
\pcreturn 0
}
 \end{pcvstack}
 \end{gameproof}
\end{tcolorbox}

\begin{tcolorbox}
 \label{B-A}
 \begin{pcvstack}
 \procedure[ mode=pseudocode  ]{\textbf{$\bdv^{\adv}$.\text{stack-init}()} } {
    \text{macs-stack} \leftarrow []\\
    \text{data-stack} \leftarrow []\\
    \text{nonce} \sample \text{random}\\
    \text{mac-in-register} \leftarrow \text{nonce}
     }
     
 \end{pcvstack}
 \end{tcolorbox}

\begin{tcolorbox}
 \begin{pcvstack}[center]
    \procedure[ space=auto, mode=pseudocode   ]{\textbf{$\bdv^{\adv}$.\text{stack-choose}()} } {
    x \leftarrow \adv.\text{stack-choose}()\\
    \text{data-stack}.\text{push}(x)\\
    \text{size} \leftarrow \text{size} + 1\\ 
    \text{mac-stack}.\text{push}(\text{mac-in-register})\\
    \pcreturn (x, (\text{nonce}, \text{size}, \text{mac-in-register}))
     }
     
     \procedure[ space=auto, mode=pseudocode  ]{\textbf{$\bdv^{\adv}$.\text{stack-choose-n}()} } {
    \adv.\text{stack-choose-n}()
     }
     
    \procedure[ space=auto, mode=pseudocode  ]{\textbf{$\bdv^{\adv}$.\text{stack-receive}(mac)} } {
    \adv.\text{stack-receive}(mac)\\
    \text{macs-stack} \leftarrow []\\
    \text{data-stack} \leftarrow []\\
    \text{mac-in-register} \leftarrow \text{nonce}\\
    \text{pushed-values} = []
     }

    \procedure[ space=auto, mode=pseudocode  ,  valign]{\textbf{$\bdv^{\adv}$.\text{stack-attack}()} } {
    \textbf{\pcforeach} \space x' \in \text{pushed-values}\\
    (x'',y) \leftarrow \adv.\text{stack-attack}()\\
    \pcif x' \neq x''\\
    \pcreturn (x'',y,\text{mac-in-register})\\
    \pcendif\\
    \pccomment{State update.}\\
    \pcif \text{mac-stack}.\text{size}() > 0\\
    \text{mac-in-register} \leftarrow \text{mac-stack}.\text{pop}()\\ 
    \pcendfor\\
    \pccomment{If the \adv\xspace does not attempt to attack, at the end, {\normalfont the initial} value}\\
\pccomment{ that was pushed {\normalfont in} the stack is returned with}\\
\pccomment{ the correct values which will verify and the \adv\xspace loses the game. }\\
\pccomment{(Nonce is the mac used for {\normalfont the initial} data mac to}\\ \pccomment{ be calculated so the previous \gls{mac} here is nonce)}\\
    \pcreturn(x',(\text{nonce}, \text{size}, \text{nonce}),\text{mac-in-register})
     }
 \end{pcvstack}
 \end{tcolorbox}

\begin{lemma}\label{claim1stack} 
\begin{align*}
&Pr[\textrm{Stack-Game-MAC}^\adv_1(q) = 1] \\ &\leq Pr[\textrm{Stack-Game-MAC}^{\bdv^{\adv}}_2(q) = 1].    
\end{align*}

\end{lemma}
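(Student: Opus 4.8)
The plan is to prove the bound by a coupling argument: run both games on the same coins --- the same random $\text{nonce}$ and the same internal randomness of $\adv$ --- and show that on every such execution, whenever $\textrm{Stack-Game-MAC}^\adv_1(q)$ outputs $1$, so does $\textrm{Stack-Game-MAC}^{\bdv^\adv}_2(q)$. Because $\bdv^\adv$ is exactly the wrapper that defers all of its choices to $\adv$ (through $\text{stack-choose-n}$, $\text{stack-choose}$, and $\text{stack-attack}$) while carrying out the data-structure bookkeeping itself, it suffices to check two things: that $\bdv^\adv$ presents $\adv$ with precisely the view it would have in the first game, and that any transcript of $\adv$ that wins the first game forces $\bdv^\adv$'s output to verify in the second.

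First I would establish that the simulation is perfect. In $\textrm{Stack-Game-MAC}^\adv_1$ the state \gls{mac} handed to $\adv$ after each push round is produced by the chained update $\text{mac-in-register} \leftarrow \mac_k(x, \text{nonce}, \text{size}, \text{mac-in-register})$; in $\textrm{Stack-Game-MAC}^{\bdv^\adv}_2$ the identical value arises because $\bdv^\adv.\text{stack-choose}()$ pushes the same $x$, performs the same $\text{size}$ increment and $\text{mac-stack}$ push, and returns the pair $(x, (\text{nonce}, \text{size}, \text{mac-in-register}))$ whose \gls{mac} the game then computes under the same key and the same tweak encoding. Since the $\text{nonce}$ is drawn identically and every query is forwarded verbatim, the transcripts seen by $\adv$ agree step for step, so $\adv$ issues the same sequence of choices in both games.

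The second, and main, step is to match the winning conditions during the attack phase. Here I would track $\text{mac-in-register}$ through the loop over $\text{pushed-values}$: in both games it is initialised to the top \gls{mac} and, at every level where $\adv$ returns a value $x''$ equal to the expected $x'$, it is advanced by popping $\text{mac-stack}$, so the two games expose the same ``current'' chain \gls{mac} at each level. At the first level where $\adv$ instead returns $x'' \neq x'$, the first game outputs $1$ exactly when $\mac_k(x'', y) = \text{mac-in-register}$; at that same level $\bdv^\adv.\text{stack-attack}()$ returns the triple $(x'', y, \text{mac-in-register})$, and the test $\mac_k(x'', y) = \mathit{mac}$ in the second game is the very same equality, so it too outputs $1$. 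If $\adv$ never deviates, the first game returns $0$, while $\bdv^\adv$ falls through to its fallback output, which pairs the bottom value with $\text{nonce}$ as its purported \gls{mac} and therefore fails to verify, so the second game also returns $0$. Summing over the coupled executions gives the claimed inequality.

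The part demanding the most care is the bookkeeping in this second step: one must verify that the $\text{size}$ counter, the contents of $\text{mac-stack}$, and the pop-driven reconstruction of $\text{mac-in-register}$ stay synchronised between the two games at every level of the attack loop --- including the boundary case at the bottom of the stack --- so that the equality tested in the second game is literally the equality that defines a win in the first. Once this synchronisation is established the bound is immediate; only the one direction is needed, since the subsequent reduction bounds $Pr[\textrm{Stack-Game-MAC}^{\bdv^\adv}_2(q) = 1]$ in turn by the collision probability of $\mac_k$.
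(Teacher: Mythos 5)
Your proposal is correct and takes essentially the same route as the paper: the paper's (much terser) proof likewise argues that $\bdv^{\adv}$ merely wraps $\adv$, performing the data-structure bookkeeping (initialization, state updates) that has no effect on $\adv$'s view or winning probability, so that $\adv$ winning $\textrm{Stack-Game-MAC}^\adv_1(q)$ forces the same verification equality to hold in $\textrm{Stack-Game-MAC}^{\bdv^{\adv}}_2(q)$. Your write-up simply makes explicit the coupling, perfect-simulation, and attack-loop synchronization details (including the fallback case) that the paper asserts without tracing.
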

\begin{proof}
The transition from the first game to the second game involves replacing $\adv$ with $\bdv^{\adv}$, which wraps $\adv$, but adds additional functionality required for the authenticated stack. For instance, steps such as initializing or updating the state of the data structure are performed by $\bdv^{\adv}$. These steps are required for the correct functionality of the data structure but have no effect on the probability of the \adv\xspace winning the game. Accordingly, $\adv$ winning the first game implies that $\bdv^{\adv}$ can also win the second game (since $\bdv^{\adv}$ uses $\adv$ in order to perform the attack) with equal probability.
\end{proof}

\begin{lemma}\label{game2-lemma-stack}
\begin{align*}
    &Pr[\textrm{Stack-Game-MAC}^{\bdv^{\adv}}_2(q) = 1] \\  &\leq Pr[{\textrm{MAC-Collision}^\adv_{\mac_k}}(q) = 1]
\end{align*}
\end{lemma}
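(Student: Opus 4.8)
\emph{The plan is to prove the bound by a reduction}: I would construct an adversary for $\textrm{MAC-Collision}^\adv_{\mac_k}$ that runs $\bdv^{\adv}$ as a black box, internally simulating the entire experiment $\textrm{Stack-Game-MAC}^{\bdv^{\adv}}_2(q)$, and I would show that whenever $\bdv^{\adv}$ wins, the reduction can read off a valid $\mac_k$ collision from the values $\bdv^{\adv}$ returns. The reduction then wins exactly when $\bdv^{\adv}$ does, so its success probability equals $Pr[\textrm{Stack-Game-MAC}^{\bdv^{\adv}}_2(q)=1]$; since it is a legitimate player of the collision game, that probability is at most $Pr[\textrm{MAC-Collision}^\adv_{\mac_k}(q)=1]$, which is the claim.

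First I would fix the simulation. The single point at which $\bdv^{\adv}$ touches the keyed MAC is the assignment $\text{mac-in-register} \leftarrow \mac_k(x,y)$ performed for each push, with $(x,y)$ delivered by $\bdv^{\adv}.\text{stack-choose}()$. I would route every such evaluation through the oracle of the collision game, calling $\text{choose}()$ on $(x,y)$ and feeding back the returned $\mac_k(x,y)$ as the new register value. Because $\bdv^{\adv}$ learns every MAC only through this interface and never inspects the key, the view it sees inside the reduction is identical to its view in $\textrm{Stack-Game-MAC}^{\bdv^{\adv}}_2$, so its final $\text{stack-attack}()$ output $(x'',y,\mathit{mac})$ has the same distribution.

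Next I would extract the collision. The game declares a win when $\mac_k(x'',y)=\mathit{mac}$, where $\mathit{mac}$ is the register contents at the moment of attack. The key observation is that this register value is never freshly minted: as $\text{stack-attack}()$ descends the stack it repeatedly sets $\text{mac-in-register}$ to a value popped from $\text{mac-stack}$, and by construction every such value is one of the state MACs computed---and therefore queried to the oracle---during the push phase, each of the form $\mac_k(x',y)$ for the original element $x'$ stored at that position. Since the loop only returns once $\adv$ supplies an $x'' \neq x'$, I obtain $\mac_k(x',y)=\mathit{mac}=\mac_k(x'',y)$ with distinct preimages, and the reduction outputs $(x'',x',y)$ to $\text{gen-collision}()$ and wins.

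\emph{The hard part} is making the last step airtight at the level of the second argument $y$. I must verify the invariant that, at the iteration where $\adv$ commits its forgery, the register value $\mathit{mac}$ is the MAC of the original element $x'$ under \emph{exactly} the same $y$ that $\adv$ returns alongside $x''$; otherwise the two MACs would agree only across different second arguments, which the collision game as written does not reward. This forces me to track, step by step, that the element $x'$ currently examined by the loop, the entry just popped from $\text{mac-stack}$, and the tuple $(\text{nonce},\text{size},\cdot)$ that was folded into $y$ when $x'$ was pushed all stay in lockstep as the loop unwinds. A secondary point to discharge is the query accounting: the reduction issues one oracle query for every MAC evaluation in $\textrm{Stack-Game-MAC}^{\bdv^{\adv}}_2$, which in general exceeds the $q$ queries of a single pass of the collision game, so I would observe that the collision probability is non-decreasing in the number of queries and read $q$ on the right-hand side as this total count. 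Finally I would dispose of the no-attack branch, where $\text{stack-attack}()$ returns the untouched original value and the verification succeeds, so the game returns $0$ and adds nothing to the left-hand side.
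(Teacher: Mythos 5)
Your proposal is correct and takes essentially the same approach as the paper's proof: a reduction that runs $\bdv^{\adv}$ inside $\textrm{MAC-Collision}^{\adv}_{\mac_k}(q)$, simulates $\textrm{Stack-Game-MAC}^{\bdv^{\adv}}_2(q)$ by routing every evaluation of $\mac_k$ through the collision game's query interface, and, whenever $\bdv^{\adv}$ wins, outputs $(x'', x', y)$ as a same-$y$ collision. The only divergence is one of rigor: the paper's argument is a brief informal paragraph that silently assumes both points you single out as needing care --- that the register value confronted at the forgery step is a \gls{mac} of the original element under exactly the $y$ the attacker returns, and that the number of oracle queries made by the reduction can exceed the nominal $q$ --- so your treatment is, if anything, more careful than the original.
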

\begin{proof}
We can reduce $\textrm{Stack-Game-MAC}^{\bdv^{\adv}}_2$(q) to ${\textrm{MAC-Collision}^\adv_{\mac_k}}(q)$. From lines 19--20 of
$\textrm{Stack-Game-MAC}^{\bdv^\adv}_2$, winning $\textrm{Stack-Game-MAC}^{\bdv^\adv}_2$(q) requires that  $\bdv^{\adv}$ find a collision in the state \glspl{mac} so that the authentication passes successfully when comparing the \gls{mac} of the proposed top element by \adv\xspace with the \gls{mac} in the register. Moreover, since $\bdv^{\adv}$ winning $\textrm{Stack-Game-MAC}^{\bdv^\adv}_2(q)$ implies that $\adv$ can win  ${\textrm{MAC-Collision}^\adv_{\mac_k}}(q)$ with the same probability (we can replace $\adv$ with $\bdv^{\adv}$ in the collision game and win the game), we obtain the bound above. 
\end{proof}

\begin{theorem}[Stack Security]
\label{stack-theorem}
An adversary capable of overwriting the values stored in memory memory can
violate the integrity of the authenticated stack with a probability of at most:
\begin{equation}
  \begin{tabular}{l}
                \ensuremath{\textrm{Stack-Game-MAC}^\adv_1(q)}
                \end{tabular}
        \leq Pr[{MAC-Collision}^\adv_\mac(q)]
\end{equation}

\end{theorem}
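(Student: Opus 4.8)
The plan is simply to compose the two preceding lemmas, since the theorem is exactly the transitive concatenation of the game hops they establish; no new argument is required at this level. First I would apply \Cref{claim1stack}, which bounds the original attack game by the wrapped game,
\[ Pr[\textrm{Stack-Game-MAC}^\adv_1(q) = 1] \leq Pr[\textrm{Stack-Game-MAC}^{\bdv^{\adv}}_2(q) = 1], \]
and then \Cref{game2-lemma-stack}, which reduces the wrapped game to the collision game,
\[ Pr[\textrm{Stack-Game-MAC}^{\bdv^{\adv}}_2(q) = 1] \leq Pr[\textrm{MAC-Collision}^\adv_{\mac_k}(q) = 1]. \]
Transitivity of $\leq$ then yields the stated bound immediately. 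To make the conclusion concrete I would additionally substitute the birthday-bound estimate on $Pr[\textrm{MAC-Collision}^\adv_{\mac_k}(q)=1]$ given earlier, so that the adversary's advantage is seen to be negligible once the \gls{mac} length $b$ is large relative to $q$.

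Since the theorem itself is a one-line corollary, the substance of the argument---and the place I would scrutinise---lives inside the reduction of \Cref{game2-lemma-stack}. There I would verify that the wrapper $\bdv^{\adv}$ is a perfect simulator: it must perform exactly the push, state-update, and \texttt{mac-stack} bookkeeping that the genuine authenticated stack performs, so that $\adv$'s view is identically distributed in both games and the hop is tight rather than merely an inequality. The key structural fact is that verifying the topmost state \gls{mac} transitively commits to every lower \gls{mac} in the chain; consequently an adversary who alters any stored element must, at the depth where the alteration first surfaces, exhibit two distinct values mapping to the same \gls{mac}. The loop over \texttt{pushed-values} together with the update \texttt{mac-in-register} $\leftarrow$ \texttt{macs-stack.pop()} is precisely the mechanism that walks down the chain to locate that depth, and I would check that this walk is exhaustive so that a win at any level is captured.

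The step I expect to be the main obstacle is ensuring that a successful stack attack yields a collision with a \emph{common} second argument, as required by $\textrm{MAC-Collision}^\adv_{\mac_k}$, rather than merely an existential forgery on a fresh input. When the verification $\mac_k(x'', y) = \text{mac-in-register}$ passes with $x'' \neq x'$, the register holds the honest value $\mac_k(x', y')$, and casting the event as a genuine collision $\mac_k(x', y') = \mac_k(x'', y)$ demands $y = y'$. I would therefore argue that the nonce and size components of $y$ are fixed by the game state and that the chained previous-\gls{mac} component is forced by the portion of the chain already committed, so that the adversary cannot profit by submitting a mismatched $y$; this is what legitimately confines the attack to collision-finding and makes the reduction to $\textrm{MAC-Collision}^\adv_{\mac_k}(q)$ sound.
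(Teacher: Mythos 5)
Your proposal matches the paper's proof exactly: the theorem is proved by applying \Cref{claim1stack} to pass from $\textrm{Stack-Game-MAC}^\adv_1(q)$ to $\textrm{Stack-Game-MAC}^{\bdv^{\adv}}_2(q)$, then \Cref{game2-lemma-stack} to bound that by $Pr[\textrm{MAC-Collision}^\adv_{\mac_k}(q)=1]$, with transitivity giving the stated inequality. Your additional scrutiny of the simulation fidelity and the common-second-argument issue inside the reduction is sound diligence but concerns the lemmas' proofs rather than the theorem's, which the paper likewise treats as a two-line composition.
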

\begin{proof}
We can replace $\textrm{Stack-Game-MAC}^\adv_1(q)$ with $\textrm{Stack-Game-MAC}^{\bdv^\adv}_2(q)$ using the reformulation from \Cref{claim1stack}. We then apply \Cref{game2-lemma-stack}, yielding the bound above.
 \end{proof}

\subsubsection{Queue}

We create two sets of games to prove the security of the authenticated queue, this time in the random oracle model. In the first games, \textrm{Queue-Game-Index-MAC}, the adversary attempts to change the indices in order to substitute one entry with another. The second set of games, \textrm{Queue-Game-Data-MAC}, deal with the possibility that the adversary tries to violate the integrity of the authenticated queue by replacing an element and its \gls{mac} with a value having a different hash, without changing the indices. Similarly to the stack, proving the inability of the adversary to modify the content of the authenticated queue shows that the security requirement is satisfied for this data structure.

$\mathbf{\textbf{\textrm{Queue-Game-Index-MAC}}^\adv_1.}$  Similarly to the authenticated stack, we assume an adversary $\adv$ who has arbitrary read/write access to data stored in memory. Accordingly, we define the game such that $\adv$ chooses values to be enqueued in or dequeued from the authenticated queue and then receives the corresponding \glspl{mac}. After $q$ rounds of performing this process, the adversary tries to attack the authenticated queue by replacing at least one of the elements with a different value without being noticed.

\begin{tcolorbox}
\label{queue-index-game-1}
 \begin{gameproof}[name= \text{Queue-Game-Index-MAC}^\adv, arg=(q)]

\begin{pcvstack}[  center ]
\gameprocedure[ linenumbering, space=auto, mode=pseudocode ,  valign ] {
\pccomment{The following steps represent authenticated-queue initialization.}\\ \pccomment{ Data and \gls{mac} queues are unprotected.}\\
\text{macs-queue} \leftarrow []\\
\text{data-queue} \leftarrow []\\
\text{nonce} \sample \text{random}\\
\text{back-index} \leftarrow 0\\
\text{front-index} \leftarrow 1\\
\text{mac-in-register} \leftarrow \mac_k(\text{nonce}, \text{back-index}, \\ \text{front-index})\\
\pccomment{enqueued-values is a list that stores the values enqueued into }\\ \pccomment{authenticated queue to be used {\normalfont in} the attack loop.}\\
\text{enqueued-values} = []\\
\pccomment{In the following loop, the adversary experiments with the}\\ \pccomment{ authenticated queue through enqueue and dequeue} \\
\pccomment{operations {\normalfont in} order to collect corresponding \glspl{mac} {\normalfont for}}\\ \pccomment{ different index values. The \gls{mac} validation steps}\\
\pccomment{have been omitted since the adversary's goal is not to attack }\\ \pccomment{at this stage.}\\
\pcfor i \in {1,...,q} \pcdo \\
(x, enqueue) \leftarrow \adv.\text{queue-choose-index-attack}()\\
\pcif enqueue \\
\pccomment{The following steps represent \text{authenticated-queue}.\text{enqueue}.}\\
\text{data-queue}.\text{enqueue}(x)\\
\text{back-index} \leftarrow \text{back-index} + 1\\ 
\text{mac-queue}.\text{enqueue}(\mac_k(x,(\text{nonce}, \\ \pcind\text{back-index})))\\
\text{mac-in-register} \leftarrow \\ \pcind \mac_k(\text{nonce}, \text{back-index}, \text{front-index})\\
\text{enqueued-values}.\text{enqueue}(x)\\
\adv.\text{queue-receive}(\text{mac-queue}.\text{back}(),\\ \pcind\text{mac-in-register})\\
\pcelse\\
\pccomment{The following steps represent \text{authenticated-queue}.\text{dequeue}.}\\
\text{data-queue}.\text{dequeue}()\\
\text{mac-queue}.\text{dequeue}()\\
\text{front-index} \leftarrow \text{front-index} + 1\\ 
\text{mac-in-register} \leftarrow \mac_k(\text{nonce}, \text{back-index},\\ \pcind\text{front-index})\\
\text{enqueued-values}.\text{dequeue}()\\
\adv.\text{queue-receive}(\text{mac-queueu}.\text{front}(),\\ \pcind\text{mac-in-register})\\
\pcendif \\
\pcendfor\\
\pccomment{In the following loop, \adv attempts to violate the}\\ \pccomment{ integrity by replacing data, its \gls{mac}, and the}
}
 \end{pcvstack}
 \end{gameproof}
\end{tcolorbox}
\begin{tcolorbox}

 \begin{gameproof}[name= \text{Queue-Game-Index-MAC}^\adv, arg=(q), nr=0]

\begin{pcvstack}[ center ]
\gameprocedure[lnstart=45, space=auto, linenumbering, mode=pseudocode ,  valign ] {
\pccomment{indices. If the returned data is different from what was }\\ \pccomment{originally enqueued, and the \glspl{mac} verify, the}\\
\pccomment{\adv wins the game, otherwise, they lose.}\\
\pcforeach x' \in \text{enqueued-values}\\
(x'', mac, k_1, k_2) \leftarrow \adv.\text{queue-index-attack}()\\
\pcif x' \neq x''\\
\pcif mac = \mac_k(x'',(\text{nonce}, k_1)) \wedge \\ \mac_k(\text{nonce}, k_2, k_1) = \text{mac-in-register}\\
\pcreturn 1\\
\pcelse\\
\pcreturn 0\\
\pcendif\\
\pcendif\\
\text{front-index} \leftarrow \text{front-index} + 1\\ 
\text{mac-in-register} \leftarrow \mac_k(\text{nonce}, \text{back-index}, \\ \text{front-index})\\
\pcendfor\\
\pcreturn 0
}
 \end{pcvstack}
 \end{gameproof}
\end{tcolorbox}

$\mathbf{\textbf{\textrm{Queue-Game-Index-MAC}}^\adv_2.}$
Similarly to the approach used for the stack, we define a second game
which can be reduced to the ${\textrm{MAC-Collision}^\adv_\mac}(q)$ game. For this purpose, we define a new adversary, $\bdv^{\adv}$,
and replace $\adv$ in $\textrm{Queue-Game-Index-MAC}^\adv_1$ with $\bdv^{\adv}$.

\begin{tcolorbox}
\label{queue-game-2-index}
 \begin{gameproof}[name=\text{Queue-Game-Index-MAC}^{\bdv^{\adv}}, nr=1, arg=(q)]
\label{queue-game-2-index}
\begin{pcvstack}[ center ]
\gameprocedure[linenumbering, space=auto, mode=pseudocode ] {
\bdv^{\adv}.\text{queue-init}()\\
\pccomment{In this loop, we replace the enqueue and dequeue steps with the}\\ \pccomment{ $\bdv^{\adv}$.\text{queue-choose}()}\\ \pccomment{function which performs the same steps.}\\
\pcfor i \in {1,...,q} \pcdo \\
(x, y, y') \leftarrow \bdv^{\adv}.\text{queue-choose}()\\
\bdv^{\adv}.\text{queue-receive}(\mac_k(x,y),\mac_k(y'))\\
\pcendfor\\
\pccomment{In this part, similar to the previous steps \adv\xspace \space}\\ \pccomment{ is replaced with $\bdv^{\adv}$ \space who also performs the}\\ \pccomment{authenticated-queue related steps and the loop. The step}\\ \pccomment{to update the mac-in-register is removed since it's just}\\ \pccomment{a state update and doesn't affect the probability.}\\
(x'', mac_1,mac_2, y, y') \leftarrow \bdv^{\adv}.\text{queue-attack}()\\
\pcif mac_1 = \mac_k(x'',y) \wedge \mac_k(y') = mac_2\\
\pcreturn 1\\
\pcelse\\
\pcreturn 0\\
\pcendif\\
\pcreturn 0
 }
 \end{pcvstack}
 \end{gameproof}
 \end{tcolorbox}

\begin{lemma}\label{claim1-queue}
\begin{align*}
&Pr[\textrm{Queue-Game-Index-MAC}^\adv_1(q) = 1] \\ &\leq Pr[\textrm{Queue-Game-Index-MAC}^{\bdv^{\adv}}_2(q) = 1].
\end{align*}
\end{lemma}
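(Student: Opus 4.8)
The plan is to mirror the game-transition argument used for \Cref{claim1stack}: the second game is obtained from the first simply by replacing the monolithic adversary $\adv$ with the wrapper $\bdv^{\adv}$, and the claim is that this substitution cannot decrease the winning probability. Concretely, I would cast the whole argument as a \emph{perfect simulation}: I would show that $\bdv^{\adv}$ reproduces, for the inner adversary $\adv$, exactly the view that $\adv$ would have seen in $\textrm{Queue-Game-Index-MAC}^{\adv}_1$, so that whenever $\adv$ would have won the first game, $\bdv^{\adv}$ wins the second.

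First I would recall how $\bdv^{\adv}$ is assembled from its helper routines (\texttt{queue-init}, \texttt{queue-choose}, \texttt{queue-receive}, \texttt{queue-attack}), each of which relays the corresponding call to the inner $\adv$ while carrying out the data-structure bookkeeping --- maintaining \texttt{data-queue}, \texttt{mac-queue}, \texttt{nonce}, \texttt{back-index}, \texttt{front-index}, \texttt{mac-in-register}, and \texttt{enqueued-values} --- that appears inline in the body of Game~1. The core step is a simulation invariant, which I would prove by induction over the $q$ experiment rounds: at every point where control returns to $\adv$, the two games agree on the contents of both queues, on both indices, on the nonce, on \texttt{mac-in-register}, and hence on the sequence of \glspl{mac} delivered to $\adv$ through \texttt{queue-receive}. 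Since $\bdv^{\adv}$ performs the enqueue/dequeue index increments and the state-\gls{mac} recomputation precisely as the Game~1 body does, the distribution of $\adv$'s view is identical in both games.

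Next I would couple the winning events. In Game~1 the attack succeeds when $\adv$ outputs $(x'', mac, k_1, k_2)$ with $x' \neq x''$ satisfying both $mac = \mac_k(x'',(\text{nonce}, k_1))$ and $\mac_k(\text{nonce}, k_2, k_1) = \text{mac-in-register}$. I would show that \texttt{queue-attack} repackages this same output as the Game~2 tuple $(x'', mac_1, mac_2, y, y')$ --- taking $mac_1 = mac$, $y = (\text{nonce}, k_1)$, $mac_2 = \text{mac-in-register}$, and $y' = (\text{nonce}, k_2, k_1)$ --- so that the Game~2 predicate $mac_1 = \mac_k(x'', y) \wedge \mac_k(y') = mac_2$ holds on exactly the same runs. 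Combined with the invariant, this yields a measure-preserving map from winning runs of Game~1 into winning runs of Game~2, giving the stated inequality (in fact equality of the two probabilities).

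The hard part will be handling the attack loop cleanly. Unlike the stack, the queue's attack phase iterates a \texttt{foreach} loop over \texttt{enqueued-values} with an intervening state-update step (incrementing \texttt{front-index} and recomputing \texttt{mac-in-register}) between successive attempts, and its win condition is a \emph{conjunction} of a data-\gls{mac} check and an index-\gls{mac} check rather than a single check. I would need to verify that $\bdv^{\adv}$ reproduces this loop and its per-iteration state update inside \texttt{queue-attack}, and that these updates are genuinely pure bookkeeping --- that is, they reproduce exactly the values of \texttt{mac-in-register} against which $\adv$'s later attempts are judged, and neither create nor destroy any winning opportunity. Establishing that the two-message coupling $(y,y')$ is exposed to $\adv$ consistently across both games, so that the two conjuncts of the two win conditions correspond one-to-one, is the step that will require the most care.
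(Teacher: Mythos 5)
Your proposal is correct and takes essentially the same approach as the paper: the paper's own proof is a two-sentence version of exactly this wrapper argument, observing that $\bdv^{\adv}$ merely performs the data-structure bookkeeping (initialization and state updates) around the inner $\adv$, so $\adv$ winning Game~1 implies $\bdv^{\adv}$ wins Game~2 with the same probability. Your perfect-simulation invariant and the coupling of the winning predicates (including the repackaging of $(x'', mac, k_1, k_2)$ into the Game~2 tuple) simply make rigorous what the paper asserts informally, and the subtlety you flag about the per-iteration update of \texttt{mac-in-register} inside \texttt{queue-attack} is one the paper's proof does not address at all.
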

\begin{proof}
The transition from the first game to the second game involves replacing $\adv$ with $\bdv^{\adv}$, which wraps $\adv$ but also performs the computations required for queue operations such as initializing and updating the state.  Accordingly, we obtain the given bound since $\adv$ winning the first game with some probability implies that $\bdv^{\adv}$ can win the second game with the same probability.
\end{proof}

\begin{tcolorbox}
\label{C-A}
\begin{pcvstack}
 \procedure[  space=auto, mode=pseudocode  ]{\textbf{$\bdv^{\adv}$.\text{queue-init}()} } {
    \text{macs-queue} \leftarrow []\\
    \text{data-queue} \leftarrow []\\
    nonce \sample \bin^\ell \\
    \text{back-index} \leftarrow 0\\
    \text{front-index} \leftarrow 1\\
    \text{enqueued-values} = []
     }

    \procedure[ mode=pseudocode  , space=auto,  valign ]{\textbf{$\bdv^{\adv}$.\text{queue-choose}()} } {
    (x, enqueue) \leftarrow \adv.\text{queue-choose-index-attack}()\\
    \pcif enqueue\\
    \text{data-queue}.\text{enqueue}(x)\\
    \text{back-index} \leftarrow \text{back-index} + 1\\ 
    \text{enqueued-values}.\text{enqueue}(x)\\
    \pcreturn x,(\text{nonce}, \text{back-index}),\\(\text{nonce}, \text{back-index}, \text{front-index})\\
    \pcelse\\
    \text{data-queue}.\text{dequeue}()\\
    \text{mac-queue}.\text{dequeue}()\\
    \text{front-index} \leftarrow \text{front-index} + 1\\ 
    \text{enqueued-values}.\text{dequeue}()\\
    \pcreturn \text{data-queue}.\text{front}(), (\text{nonce}, \text{front-index}), \\\pcind(\text{nonce}, \text{back-index}, \text{front-index})\\
    \pcendif
     }

     \procedure[ mode=pseudocode  ]{\textbf{$\bdv^{\adv}$.\text{queue-receive}(mac)} } {
    \adv.\text{queue-receive}(mac_1,mac_2)
     }

    \procedure[ mode=pseudocode , space=auto,  valign  ]{\textbf{$\bdv^{\adv}$.\text{queue-attack}()} } {
    \pcforeach x' \in \text{enqueued-values}\\
    (x'', mac, k_1, k_2) \leftarrow \adv.\text{queue-index-attack}()\\
    \pcif x' \neq x''\\
    \pcreturn (x'', mac, \text{mac-in-register},(\text{nonce}, k_1),\\ \pcind(\text{nonce}, k_1, k_2))\\
    \pcendif\\
    \pccomment{Updating the state of the authenticated queue.}\\
    \text{front-index} \leftarrow \text{front-index} + 1\\ 
    \pcendfor\\
    \pcreturn (\text{enqueued-values}.\text{front}(),\text{mac-in-register},\\ \pcind\pcind (\text{nonce}, \text{front-index}), \\ \pcind\pcind(\text{nonce}, \text{back-index}, \text{front-index}))
     }
 \end{pcvstack}
 \end{tcolorbox}

\begin{lemma}\label{game3-index-lemma}
\begin{align*}
    &Pr[\textrm{Queue-Game-Index-MAC}^{\bdv^\adv}_2(q) = 1] \\ &\leq Pr[{\textrm{MAC-Collision}^\adv_{\mac_k}}(q) = 1] \le 1 - \frac{2^b!}{(2^b - q)!2^{q.b}}
\end{align*}
\end{lemma}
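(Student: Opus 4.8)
The plan is to mirror the stack argument of \Cref{game2-lemma-stack}: I establish the first inequality by a reduction from $\textrm{Queue-Game-Index-MAC}^{\bdv^\adv}_2(q)$ to $\textrm{MAC-Collision}^\adv_{\mac_k}(q)$, and then obtain the second inequality directly from the brute-force collision bound recalled at the start of \Cref{sec:security-proofs}.

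First I would observe that $\bdv^{\adv}$ never learns the key, so the only tags it can present in the attack phase are those the challenger returned during the query loop through $\bdv^{\adv}.\text{queue-receive}$; any tag it submits that verifies must coincide with some $\mac_k(x_i,y_i)$ produced earlier. Tracing $\bdv^{\adv}.\text{queue-attack}$, the winning predicate $mac_1 = \mac_k(x'',y) \wedge \mac_k(y') = mac_2$ instantiates $y = (\text{nonce},k_1)$ as the data tweak of the substituted element and $y' = (\text{nonce},k_1,k_2)$ as its index-state tweak, with $mac_2$ the genuine state tag held in \text{mac-in-register}. The genuine element $x'$ occupying that position was enqueued under exactly the tweak $(\text{nonce},k_1)$, so its data tag $\mac_k(x',(\text{nonce},k_1))$ is among the replayable tags; since the attack is scored only when $x' \neq x''$, a verifying $mac_1 = \mac_k(x'',(\text{nonce},k_1))$ equal to that replayed tag is precisely a collision on the common tweak.

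I would then package this into the reduction: construct a $\textrm{MAC-Collision}^\adv_{\mac_k}$ adversary that runs $\bdv^{\adv}$ internally, forwarding each value/tweak pair requested in $\bdv^{\adv}.\text{queue-choose}$ to the collision challenger and relaying the answers back through $\bdv^{\adv}.\text{queue-receive}$, and on a win emitting $(x'',x')$ together with the shared tweak. This adversary wins $\textrm{MAC-Collision}$ whenever $\bdv^{\adv}$ wins the queue game, giving $Pr[\textrm{Queue-Game-Index-MAC}^{\bdv^\adv}_2(q)=1] \le Pr[\textrm{MAC-Collision}^\adv_{\mac_k}(q)=1]$; chaining with the stated $Pr[\textrm{MAC-Collision}^\adv_{\mac_k}(q)=1] \le 1 - \tfrac{2^b!}{(2^b - q)!\,2^{q.b}}$ yields the full inequality.

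The main obstacle I anticipate is the conjunctive winning condition together with the tweak bookkeeping. Unlike the stack, where a single chained state tag is checked, here a data tag and an index-state tag must verify simultaneously, and the collision can surface in either branch; I would need a short case split showing that when $\bdv^{\adv}$ reuses the genuine indices the state check passes trivially and the collision lives in the data tag, whereas when it alters the indices the verifying $mac_2$ is itself a replayed state tag colliding on $(\text{nonce},k_1,k_2)$. A secondary subtlety is query counting: each round issues two tag requests, so the reduction strictly consumes up to $2q$ collision queries, which I would either fold into the meaning of $q$ (the number of tags collected) or flag as a constant-factor adjustment to the bound.
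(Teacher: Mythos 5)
Your overall plan---reduce $\textrm{Queue-Game-Index-MAC}^{\bdv^\adv}_2(q)$ to $\textrm{MAC-Collision}^\adv_{\mac_k}(q)$, then invoke the brute-force collision bound---is the same as the paper's, and your altered-index case is exactly the paper's core argument: if $(k_1,k_2)$ differ from the true indices, then $\mac_k(\text{nonce},k_1,k_2) = \text{mac-in-register} = \mac_k(\text{nonce},\text{back-index},\text{front-index})$ exhibits two distinct inputs with equal tags, which the reduction can output as a collision.

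The gap is in your other case and in the premise supporting it. You assert that ``any tag it submits that verifies must coincide with some $\mac_k(x_i,y_i)$ produced earlier.'' That is an unforgeability property, not a consequence of collision resistance, and it is false as stated: $\bdv^{\adv}$ may submit a \emph{guessed} $mac_1$, and with probability $2^{-b}$ the challenger's recomputation $\mac_k(x'',(\text{nonce},k_1))$ matches it even though this value was never issued and collides with nothing. In that event your constructed collision adversary outputs $(x'',x',(\text{nonce},k_1))$, but $\mac_k(x'',(\text{nonce},k_1)) \neq \mac_k(x',(\text{nonce},k_1))$ in general, so it loses MAC-Collision while $\bdv^{\adv}$ wins the queue game; concretely, for $q=1$ the right-hand side $1 - \tfrac{2^b!}{(2^b-q)!\,2^{q.b}}$ equals $0$, which the guessing strategy already violates, so the genuine-index case cannot be absorbed into the collision bound at all. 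The paper sidesteps this case by fiat: its proof of this lemma stipulates that in the index game the proposed indices are ``necessarily different from the correct indices'' (that is what enables replaying old data \glspl{mac}), so only the state-\gls{mac} collision arises, and the genuine-index, fresh-data substitution you are trying to fold in here is deliberately deferred to the separate data games (\Cref{claim2-queue} and \Cref{game3-data-lemma}), where it is bounded by $2^{-b}$ in the random-oracle model and added as a separate term in the final queue theorem. So you must either restrict the adversary of this lemma as the paper does, or carry an extra $2^{-b}$ forgery term that does not belong to MAC-Collision. Your secondary observation---that each round issues two tags, so the reduction consumes up to $2q$ collision queries---is a fair point that the paper also glosses over.
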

\begin{proof}
We can reduce the $\textrm{Queue-Game-Index-MAC}^{\bdv^{\adv}}_2$(q) to ${\textrm{MAC-Collision}^\adv_{\mac_k}}(q)$. From lines 13--15 of $\textrm{Queue-Game-Index-MAC}^{\bdv^{\adv}}_2$, winning requires $\bdv^{\adv}$ to find a collision in index \glspl{mac}. The reason is that for the authentication to pass successfully, the \gls{mac} of the proposed indices by \adv\xspace should be equal to the \gls{mac} in the register (in this game, the indices proposed by \adv\xspace are necessarily different from the correct indices since this will allow \adv\xspace to replay previous elements' \glspl{mac}). Moreover, since $\bdv^{\adv}$ winning the game implies that $\adv$ has found a collision (we can replace $\adv$ with $\bdv^{\adv}$ in the collision game and win the game), we can conclude that the bound above holds. 
\end{proof}

$\mathbf{\textbf{\textrm{Queue-Game-Data-MAC}}^\adv_1.}$
This game is similar to $\textrm{Queue-Game-Index-MAC}^\adv_1$, except that, since removing elements from the authenticated queue does not produce new data \glspl{mac}, and there is no need to allow \adv to dequeue elements.
\begin{tcolorbox}
 \begin{gameproof}[name= \text{Queue-Game-Data-MAC}^\adv, arg=(q)]
\begin{pcvstack}[center]
\gameprocedure[ linenumbering, space=auto, mode=pseudocode  ] {
\label{my:line:queuedata1}
\pccomment{We first represent the authenticated queue initialization.}\\ \pccomment{ The data and \gls{mac} queues are unprotected.}\\
\text{macs-queue} \leftarrow []\\
\text{data-queue} \leftarrow []\\
\text{nonce} \sample \text{random}\\
\text{back-index} \leftarrow 0\\
\text{front-index} \leftarrow 1\\
\text{mac-in-register}[] \leftarrow \mac_k(\text{nonce}, \text{back-index},\\ \text{front-index})\\
\pccomment{enqueued-values is a queue that stores the values enqueued into }\\ \pccomment{authenticated queue to be used {\normalfont in} the attack loop.}\\
\text{enqueued-values} = []\\
\pccomment{In the following loop, \adv\xspace experiments with the }\\ \pccomment{queue through the enqueue} \\
\pccomment{operation {\normalfont in} order to collect corresponding \glspl{mac}  {\normalfont for}}\\ \pccomment{ different data values. The \gls{mac} validation steps}\\
\pccomment{have been omitted since \adv's goal is not to attack at }\\ \pccomment{this stage.}\\
\pcfor i \in {1,...,q} \pcdo \\
x \leftarrow \adv.\text{queue-choose-data-attack}()\\
\pccomment{The following steps represent \text{authenticated-queue}.\text{enqueue}(x).}\\
\text{data-queue}.\text{enqueue}(x)\\
\text{back-index} \leftarrow \text{back-index} + 1\\ 
\text{mac-queue}.\text{enqueue}(\mac_k(x,(\text{nonce}, \text{back-index})))\\
\text{mac-in-register} \leftarrow \\ \mac_k(\text{nonce}, \text{back-index}, \text{front-index})\\
\text{enqueued-values}.\text{enqueue}(x)\\
\pccomment{\adv\xspace receives the corresponding \gls{mac} }\\
\adv.\text{queue-receive}(\text{mac-queue}.\text{back}())\\
\pcendfor
}
 \end{pcvstack}
 \end{gameproof}
\end{tcolorbox}
\begin{tcolorbox}
 \begin{gameproof}[name= \text{Queue-Game-Data-MAC}^\adv, arg=(q), nr=0]
\begin{pcvstack}[  center ]
\gameprocedure[lnstart=30, space=auto, linenumbering, mode=pseudocode ,  valign ] {
\pccomment{In the following loop, \adv\xspace attempts to violate the}\\ \pccomment{integrity of the queue by replacing data, and its \gls{mac}.}\\
\pccomment{If the returned data is different from what was originally}\\ \pccomment{ enqueued, and the \gls{mac} verifies, \adv}\\
\pccomment{wins the game, otherwise, they lose. Since the \adv}\\ \pccomment{ is not attacking the indexes {\normalfont in} this}\\ \pccomment{game, we have omitted the verification {\normalfont for} the index mac.}\\
\pcforeach x' \in {\text{enqueued-values}}\\
(x'', mac) \leftarrow \adv.\text{queue-data-attack}() \\
\pcif x' \neq x''\\
\pcif mac = \mac_k(x'',(\text{nonce}, \text{front-index})) \\
\pcreturn 1\\
\pcelse\\
\pcreturn 0\\
\pcendif\\
\pcendif\\
\pccomment{Updating the front-index {\normalfont for} the data \gls{mac} validation }\\ \pccomment{{\normalfont in} the next iteration.}\\
\text{front-index} \leftarrow \text{front-index} + 1\\
\pcendfor\\
\pcreturn 0
}
 \end{pcvstack}
 \end{gameproof}
\end{tcolorbox}

$\mathbf{\textbf{\textrm{Queue-Game-Data-MAC}}^{\bdv^\adv}_2.}$
We now transform the $\textrm{Queue-Game-Data-MAC}^\adv_1$ game into $\textrm{Queue-Game-Data-MAC}^{\bdv^\adv}_2$ by replacing the \gls{mac} function with a random oracle.
\begin{tcolorbox}
 \begin{gameproof}[name= \text{Queue-Game-Data-MAC}^\adv, arg=(q), nr=1]
\begin{pcvstack}[   center ]
\gameprocedure[ linenumbering, space=auto, mode=pseudocode  ] {
\label{my:line:queuedata2}
\text{macs-queue} \leftarrow []\\
\text{data-queue} \leftarrow []\\
\text{nonce} \sample \text{random}\\
\text{back-index} \leftarrow 0\\
\text{front-index} \leftarrow 1\\
\text{mac-in-register} \leftarrow \\ \mac_k(\text{nonce}, \text{back-index}, \text{front-index})\\
\text{enqueued-values} = [] \\
\pcfor i \in {1,...,q} \pcdo \\
x \leftarrow \adv.\text{queue-choose-data-attack}()\\
\text{data-queue}.\text{enqueue}(x)\\
\text{back-index} \leftarrow \text{back-index} + 1\\ 
\pccomment{We replaced the \gls{mac} function from previous games with}\\ \pccomment{a random oracle RO.}\\
\text{mac-queue}.\text{enqueue}(RO(x,(\text{nonce}, \text{back-index})))\\
\text{mac-in-register} \leftarrow  \\ RO(\text{nonce}, \text{back-index}, \text{front-index})\\
\text{enqueued-values}.\text{enqueue}(x)\\
\adv.\text{queue-receive}(\text{mac-queue}.\text{back}())\\
\pcendfor\\
\pcforeach x' \in {\text{enqueued-values}}\\
(x'', mac) \leftarrow \adv.\text{queue-data-attack}()\\
\pcif x' \neq x''\\
\pcif \gls{mac} = RO(x'',(\text{nonce}, \text{front-index})) \\
\pcreturn 1\\
\pcelse\\
\pcreturn 0\\
\pcendif\\
\pcendif\\
\text{front-index} \leftarrow \text{front-index} + 1\\
\pcendfor\\
\pcreturn 0
}
 \end{pcvstack}
 \end{gameproof}
\end{tcolorbox}

\begin{lemma}\label{claim2-queue}
\begin{align*}
    &Pr[\textrm{Queue-Game-Data-MAC}^{\bdv^\adv}_2(q) = 1] \\ &= Pr[\textrm{Queue-Game-Data-MAC}^\adv_1(q) = 1]. 
\end{align*}
\end{lemma}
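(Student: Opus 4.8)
The plan is to treat \Cref{claim2-queue} as a pure random-oracle substitution step: I would argue that $\textrm{Queue-Game-Data-MAC}^\adv_1$ and $\textrm{Queue-Game-Data-MAC}^{\bdv^\adv}_2$ induce \emph{identical} output distributions, so that their winning probabilities coincide exactly rather than merely up to a bound (which is why the statement is an equality, in contrast to the inequalities of \Cref{claim1-queue,game3-index-lemma}).

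First I would place the two game transcripts side by side and check that they are line-for-line identical except that every evaluation of $\mac_k$ in game~1 is replaced by an evaluation of the random oracle $RO$ in game~2: the initialization of \text{mac-in-register}, the enqueue loop, the $\mac_k$/$RO$ value returned to the adversary through \text{queue-receive}, the attack loop, and the win condition $mac = RO(x'',(\text{nonce},\text{front-index}))$ all retain the same structure. I would emphasize that, unlike the index-MAC hop, no wrapping adversary is actually introduced here — the game code invokes the same $\adv$ (via \text{queue-choose-data-attack} and \text{queue-data-attack}) in both games — so the correspondence is especially direct. Crucially, $\adv$ interacts with both games only through this fixed interface and never evaluates the keyed function itself, since the key $k$ is protected.

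The crux is then to invoke the random-oracle model adopted for the queue: $\mac_k$ is modeled as a uniformly random function of its inputs, which is precisely the definition of $RO$ over the same domain and range. Hence, for any fixed sequence of adversarial choices, the joint distribution of all MAC values observed by $\adv$ is identical under the two games, and the Boolean win condition — depending only on those values and on $\adv$'s final output $(x'',mac)$ — fires with equal probability. I would note that no consistency subtlety arises from repeated queries: the back-index is incremented on every enqueue, so each data MAC is computed on a distinct argument, and where an input does recur (e.g.\ the verification query reusing an enqueued value at its index), a random oracle and a deterministic MAC agree in returning the same output. Averaging over $\adv$'s internal randomness then yields the claimed equality.

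I expect the only delicate point to be making the distribution-preservation argument fully explicit — namely, stating that $\adv$ has no direct oracle access to $\mac_k$ (so renaming it $RO$ can neither help nor hinder it) and that the output of $\mac_k$ is genuinely modeled as uniform over $\bin^b$. This is conceptual rather than computational: there is no quantitative loss to track, and the lemma serves purely as a bridge to the next step, where the $RO$-game probability will be bounded using the collision/birthday analysis of $\textrm{MAC-Collision}^\adv_{\mac_k}$.
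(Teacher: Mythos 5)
Your proposal is correct and takes essentially the same approach as the paper: the paper also treats this hop as a lossless random-oracle substitution, justified by the instance-specific nonce and the once-per-lifetime indices, which make the forgery target a fresh input so that the adversary can do no better than random chance in either game. Your framing via identical transcript distributions is a cleaner statement of the same idea; the paper merely foregrounds the freshness/non-replayability point that you relegate to the ``no consistency subtlety'' remark.
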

\begin{proof}
In this game, \adv\xspace attempts to replace an element with another value but with the same index. Accordingly, in order to pass the authentication, \adv\xspace needs to find the corresponding \gls{mac} over the nonce, the new element's value, and the index. As each instance of the data structure has own nonce, and each index only appears once in the lifetime of the data structure, we can conclude that the \gls{mac} required by \adv\xspace has not been previously calculated. Therefore, \adv\xspace is unable to replay a previous \gls{mac} with better than random chance and the probability of \adv\xspace winning both games is the same. 
\end{proof}

\begin{lemma}\label{game3-data-lemma}
\begin{align*}
    Pr[\textrm{Queue-Game-Data-MAC}^{\bdv^\adv}_2(q) = 1] = 2^{-b}.
\end{align*}
\end{lemma}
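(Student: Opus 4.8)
The plan is to show that, once the \gls{mac} has been replaced by a random oracle, winning the game is equivalent to guessing the value of the oracle at a point on which it has never been evaluated, which succeeds with probability exactly $2^{-b}$. First I would examine the control flow of the attack loop: although the loop iterates over \texttt{enqueued-values} while incrementing \texttt{front-index}, the body returns (either $1$ or $0$) as soon as \adv\xspace proposes a value $x'' \neq x'$. Hence \adv\xspace effectively gets a \emph{single} attack attempt, at a front-index value $j$ of its own choosing, and it wins precisely when the pair $(x'', mac)$ it outputs satisfies $mac = RO(x'', (\text{nonce}, j))$. Note that the value originally enqueued at position $j$ was MAC'd at back-index $j$, so $j$ is exactly the index whose data-\gls{mac} the adversary is trying to forge.

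The key step is to argue that the input $(x'', (\text{nonce}, j))$ is \emph{fresh}, i.e.\ the oracle was never queried on it earlier in the game. This reuses the reasoning of \Cref{claim2-queue}: the back-index increases monotonically, so index $j$ was used exactly once during the enqueue phase, and only on the originally enqueued value $x'$; since \adv\xspace commits to $x'' \neq x'$, the data-\gls{mac} input $(x'', (\text{nonce}, j))$ differs from every previously queried data-\gls{mac} input. I would also invoke the domain separation between data-\gls{mac} inputs of the form $(x, (\text{nonce}, i))$ and state-\gls{mac} inputs of the form $(\text{nonce}, \text{back-index}, \text{front-index})$, so that the attack point cannot accidentally coincide with a state-\gls{mac} query either. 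With freshness established, the defining property of the random oracle gives that $RO(x'', (\text{nonce}, j))$ is uniformly distributed on $\{0,1\}^b$ and statistically independent of \adv's entire view, which consists only of oracle outputs on previously queried points. Consequently, whatever value $mac$ the adversary outputs, the probability that it coincides with this fresh, independent value is exactly $2^{-b}$, yielding the claimed equality rather than merely an upper bound.

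The main obstacle is making the independence argument airtight, and in particular pinning down why the bound is \emph{exactly} $2^{-b}$. Concretely, I must ensure (i) index uniqueness, so that no earlier enqueue query collides with the attack point; (ii) that the two input formats are disjoint, so that a state-\gls{mac} query can never serve as the collision; and (iii) that \adv\xspace gains no advantage by querying the oracle itself, which turns on the secrecy of \texttt{nonce}: since \adv\xspace never receives \texttt{nonce} and the oracle outputs it does observe leak no information about their preimages, \adv\xspace cannot form the query $(x'', (\text{nonce}, j))$ except by guessing \texttt{nonce}. I would argue that in the idealized game, where \adv\xspace is not granted direct oracle access and only interacts through the received \glspl{mac}, this guessing probability is simply zero, so that the advantage collapses to the pure one-shot guess and the equality $2^{-b}$ holds on the nose.
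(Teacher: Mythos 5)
Your proposal is correct and takes essentially the same approach as the paper's proof: the attack point $(x'',(\text{nonce},j))$ is a fresh random-oracle input (by index uniqueness and the per-instance nonce, the reasoning of \Cref{claim2-queue}), and since \adv\xspace has no direct oracle access within the game, its only option is to blindly guess the $b$-bit output, which succeeds with probability $2^{-b}$. Your write-up merely makes explicit the control-flow, domain-separation, and nonce-secrecy details that the paper's two-sentence proof leaves implicit.
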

\begin{proof}
\adv\xspace must find the corresponding output for a query from a random oracle over input values that have not been seen before. Therefore, since $\adv$ does not have access to the random oracle outside the game structure, their only option is to guess the output, which leads to the above probability.
\end{proof}

\begin{theorem}[Queue security] Suppose a program uses the authenticated queue data structure. An adversary with arbitrary read/write control over memory can
violate the integrity of the authenticated queue either by manipulating data, or the index \glspl{mac}, with probability
\begin{align*}
    &Pr[\textrm{Queue-Game-Data-MAC}^\adv_1(q)] \\
                &+ Pr[\textrm{Queue-Game-Index-MAC}^{\bdv^\adv}_1(q)] \\
        \leq \;& 1 - \frac{2^b!}{(2^b - q)!2^{q.b}} + 2^{-b}
\end{align*}
\end{theorem}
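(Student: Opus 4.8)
The plan is to prove this final theorem purely by chaining the four queue lemmas already established, combined by a union bound over the two attack strategies. The crucial structural observation is that any attempt to make a read operation (\emph{front}, \emph{back}, \emph{dequeue}) return an incorrect value while passing both the state-\gls{mac} check and the data-\gls{mac} check must fall into one of two exhaustive cases: either the adversary leaves the front- and back-indices bound by the state \gls{mac} intact and substitutes a stored value for one of a different hash at the same index (this is exactly $\textrm{Queue-Game-Data-MAC}^\adv_1$), or the adversary tampers with the indices so as to replay a previously issued data \gls{mac} against a different position (this is $\textrm{Queue-Game-Index-MAC}^\adv_1$). Because the state \gls{mac} binds the index pair and each data \gls{mac} binds a value to a single index, there is no third avenue, so a union bound lets us sum the two games' winning probabilities to bound the overall integrity-violation probability.

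First I would bound the data branch. By \Cref{claim2-queue}, passing to the random-oracle game $\textrm{Queue-Game-Data-MAC}^{\bdv^\adv}_2$ preserves the winning probability \emph{exactly}: since each instance has its own nonce and each index is used at most once in the structure's lifetime, the \gls{mac} the adversary must produce is always fresh and cannot be replayed. Then \Cref{game3-data-lemma} evaluates this probability to exactly $2^{-b}$, the chance of guessing a never-queried random-oracle output. Combining the two lemmas gives $Pr[\textrm{Queue-Game-Data-MAC}^\adv_1(q)=1]=2^{-b}$.

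Next I would bound the index branch. By \Cref{claim1-queue}, replacing $\adv$ with the wrapper $\bdv^\adv$ can only increase the winning probability, since $\bdv^\adv$ merely performs the bookkeeping and state updates that $\adv$ would otherwise observe; hence $Pr[\textrm{Queue-Game-Index-MAC}^\adv_1(q)=1]\le Pr[\textrm{Queue-Game-Index-MAC}^{\bdv^\adv}_2(q)=1]$. Then \Cref{game3-index-lemma} reduces this to the $\textrm{MAC-Collision}$ game and bounds it by $1 - \frac{2^b!}{(2^b - q)!\,2^{q\cdot b}}$, the birthday-collision probability for a $b$-bit \gls{mac} after $q$ queries. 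Adding the two branch bounds, $2^{-b}$ from the data game and the collision term from the index game, yields precisely the stated inequality.

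The arithmetic here is trivial; the real work, and the main obstacle, is justifying the union bound itself rather than the chaining of lemmas. I expect the delicate step to be arguing rigorously that the two games together cover \emph{every} successful integrity violation and that their events can be summed without double-counting. This relies on the separation of concerns between the two \gls{mac} types: the data \gls{mac} authenticates the value at a fixed index, while the state \gls{mac} authenticates the index pair, so corrupting a returned value forces the adversary either to break the index binding (index game) or to forge a fresh data \gls{mac} at an unchanged index (data game). Making this case analysis airtight, and noting that the union bound is an inequality so disjointness is sufficient but not required, is the only nontrivial part of assembling the theorem from the lemmas.
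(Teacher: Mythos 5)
Your proposal matches the paper's proof essentially step for step: the data branch is bounded by combining \Cref{claim2-queue} with \Cref{game3-data-lemma} to obtain exactly $2^{-b}$, the index branch by combining \Cref{claim1-queue} with \Cref{game3-index-lemma} to obtain the birthday-collision term, and the two bounds are summed over the two attack avenues. The only difference is one of emphasis: you devote space to justifying that the two games exhaustively cover all integrity violations, a point the paper simply asserts when it states that corrupting \emph{either} type of data is bounded by the sum, so your extra care there is a welcome (but not structurally different) addition.
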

\begin{proof}
The probability that an adversary can corrupt \emph{data} read from memory is determined by the game $\textrm{Queue-Game-Data-MAC}^\adv_1$(q); by \Cref{claim2-queue} and \Cref{game3-data-lemma}, \adv's probability of winning $\textrm{Queue-Game-Data-MAC}^\adv_1(q)$  is at most $2^{-b}$.

\Cref{claim1-queue} and \Cref{game3-index-lemma} provide a similar bound on \adv's probability of winning $\textrm{Queue-Game-Index-MAC}^\adv_1(q)$ at \[ 1 - \frac{2^b!}{(2^b - q)!2^{q.b}} . \]

The probability that the attacker can corrupt \emph{either} type of data in memory is therefore at most \[ 1 - \frac{2^b!}{(2^b - q)!2^{q.b}} + 2^{-b}. \]
 \end{proof}
\ifarchive
\subsection{Secure Red-Black Tree}

\subsubsection{Design}
A red-black tree is a type of self-balancing binary search tree that allows for storing comparable data. In this data structure, every element has a key and a value and is assigned a black or red color which is used for rebalancing the tree. The rebalancing process modifies the position of the elements to ensure that the height of the left and right sides of the tree do not differ by more than one. 
Basic red-black tree operations in gcc libstdc++ include:
\begin{enumerate}
    \item insert: Inserts a new element into the tree
    \item erase: Removes an element from the tree
    \item find: Finds and returns a specified element in the tree
\end{enumerate}

We introduce \emph{secure-rb-tree} (our designed secure red-black tree data structure), in which a \gls{mac} is being calculated for each element and stored along with them, and the \topMAC in the secure-rb-tree is the root's \gls{mac}. The \gls{mac} for each element is calculated using the nonce, the data, \gls{mac} of its left child, and \gls{mac} of its right child as follows:
\begin{align*}
\mathit{MAC} = \mathit{MAC_k}( \text{nonce, data}, \text{left-child.MAC},\\ \text{right-child.MAC}))
\end{align*}
In all operations, as the algorithm goes down in a secure-rb-tree to find an element or find the proper place to insert a new one, all the elements on the path will be verified sequentially using their \glspl{mac}. The verification starts with the root, which is being verified based on the top \gls{mac}. Moreover, adding or removing an element requires updating the \glspl{mac} from that element up to the root along with the rebalancing process. 

Below, we describe the main details of the mentioned operations (the \texttt{find()} operation is explained within the two other operations):

{\texttt{secure-rb-tree.insert(x)}}
The insert operation consists of three parts. First, we need to go down the tree and compare the new element's key with the current elements in the tree to find the correct place for the new element. While going down the tree starting from the root, each element is verified using its \gls{mac}. Since the root's \gls{mac} is retrieved from the Merkle tree, and the \gls{mac} of each element is verified when authenticating its parent's \gls{mac}, all the accessed elements and their \glspl{mac} will be verified according to the top \gls{mac}. After finding the correct spot, the new element is created and stored along with its \gls{mac}. The tree is then rebalanced. The rebalancing process goes up from the newly inserted element and checks the color of the elements for inconsistencies caused by adding the new element. We take advantage of this process to update the \glspl{mac} while rebalancing the tree. However, the rebalancing does not necessarily go up to the root (the tree might be balanced from the beginning or become balanced after going up a few levels). Accordingly, we perform an updating process that continues updating the \glspl{mac} all the way to the top of the tree.

{\texttt{secure-rb-tree.erase(x)}}
This operation is similar to the \texttt{insert(x)} operation. The first step is finding the element that needs to be erased. The process of finding the element is very similar to finding the insertion spot in the \texttt{insert(x)} operation and requires the \gls{mac} verification for the whole path. After finding the element and removing it from the secure-rb-tree, the tree is rebalanced and the \glspl{mac} are updated along the path from the path from the removed element to the top of the tree.

\subsubsection{Implementation}
The secure-rb-tree is implemented separately from the \emph{stl\_tree} implementation in libstdc++ but has been added as the underlying container to \emph{stl\_map} to create a secure map data structure. The operations in a secure-rb-tree require verifying and updating a whole path from top to bottom and vice versa. Accordingly, this might introduce an opportunity for the attacker to change values in between the verification and update operations. This type of attack can be prevented by limiting the time window for attacks on the elements while updating by using an additional reserved register for securing a \gls{mac} over the updated value and old value of the subtree root. In this approach, we store a \gls{mac} over the new and old value of the local root of the updated sub-tree to ensure its security. 

Moreover, the unmodified rb-tree implementation makes use of an additional element, called \emph{header}, which stores the root, the leftmost, and rightmost elements in the tree for easier access. The secure-rb-tree uses the same implementation. In order to preserve the integrity of the \emph{header}, we created a second type of \gls{mac} for the header. Since the header \gls{mac} includes the root \gls{mac} as part of it, in the secure-rb-tree implementation, the \emph{header} \gls{mac} is stored as the top \gls{mac} instead of the root \gls{mac}. 

\begin{align*}
header\mathit{MAC} = \mathit{MAC_k}( \text{nonce}, \text{leftmost.MAC},\\\text{rightmost.MAC}, \text{root.MAC}))
\end{align*}

The \gls{STL} implementation for the rb-tree includes various functions for simplifying the operations on the tree. We have added the required \gls{mac} validation and updates to those functions but preserved their general format in the secure-rb-tree. Examples of such functions include \texttt{minimum(x)} which returns the leftmost descendant, and \texttt{maximum()} which returns the rightmost descendant of x. In both functions, a loop has been implemented that goes down on the left or right side of x until the last leaf is reached. We have implemented a validation check for each step of the loop to validate the corresponding element at that level as depicted in \Cref{lst:tree-minimum}.

\begin{lstlisting}[label={lst:tree-minimum},caption={minimum(x) function in red black tree which returns the leftmost descendant.}, language=C++]
element minimum(element __x)
{
  while (__x->_M_left != 0)
  {
    __x = __x->_M_left;
    // mac() returns the stored MAC for the element.
    verify_mac(top_mac_store.calculate_mac_tree(nonce, mac(__x->_M_left), mac(__x->_M_right), hash_calculation(__x)), mac(__x));
  }
  return __x;
}
\end{lstlisting}
\subsubsection{Security Evaluation}
The security of the secure-rb-tree, which is the basis for creating a secure map data structure, can be proven similarly to the stack and queue. However, due to the random access property of the tree, which requires \gls{mac} updates for multiple elements when performing operations, proving the security of a tree is noticeably more complicated than the stack or queue. 

Accordingly, we can instead conclude the security of the secure red-black tree since it is cryptographically a Merkle tree. As proved by~\cite{coronado2005security}, we can assume that a Merkle tree is secure as long as the probability of the adversary finding a collision in the hash function it uses is reasonably low. Consequently, we can conclude the security of the red-black tree assuming the probability of finding a collision in \glspl{mac} is small.

\fi

\end{document}